\documentclass[12pt,twoside,letter]{amsart}
\usepackage{txfonts}
\usepackage{bbm}
\usepackage{mathrsfs}
\usepackage{amsfonts}
\usepackage{amsmath}
\usepackage{amssymb}
\usepackage{wasysym}
\usepackage{psfrag}
\usepackage{graphics,epsfig,amsmath}  % standard LaTex graphics tool
\setlength{\unitlength}{1cm}
\usepackage{comment}
\usepackage{enumerate}

\newtheorem{theorem}{Theorem}
\newtheorem{definition}[theorem]{Definition}
\newtheorem{algorithm}[theorem]{Algorithm}
\newtheorem{lemma}[theorem]{Lemma}
\newtheorem{preremark}[theorem]{Remark}
\newenvironment{remark}{\begin{preremark}\rm}{\end{preremark}}

\def\E{\mathcal{E}}
\def\A{\mathscr{A}}
\def\ep{\epsilon}
\def\U{\widehat{U}}
\def\h{\hat{h}}

\title[KAM theory for 1-D  quasicrystals]
{KAM theory for quasi-periodic equilibria in 1-D quasiperiodic media}
\author[X. Su] {Xifeng Su}
\address{Dept. of Mathematics\\
 Nanjing University\\
 Nanjing, 210093, CHINA}
\address{Dept. of Mathematics\\
Univ. of Texas at Austin \\
1 Univ. Station C1200\\
Austin, TX 78712-0257 }
\email{billy3492@gmail.com,
xifengsu@math.utexas.edu}
\author[R. de la Llave]{Rafael de la Llave}
\address{Dept. of Mathematics\\
Univ. of Texas at Austin \\
1 Univ. Station C1200\\
Austin, TX 78712-0257 }
\email{llave@math.utexas.edu}

\begin{document}
\maketitle
\begin{abstract}
We consider Frenkel-Kontorova models corresponding to 1 dimensional
quasicrystals.

We present a KAM theory for quasi-periodic equilibria. The theorem
presented has an \emph{a-posteriori} format. We show that, given an
approximate solution of the equilibrium equation, which satisfies
some appropriate non-degeneracy conditions, then, there is a true
solution nearby. This solution is locally unique.

Such a-posteriori theorems can be used to validate numerical
computations and also lead immediately to several consequences a)
Existence to all orders of perturbative expansion and their
convergence b) Bootstrap for regularity c) An efficient method to
compute the breakdown of analyticity.

Since the system does not admit an easy dynamical formulation, the
method of proof is based on developing several identities. These
identities also lead to very efficient  algorithms.
\end{abstract}

\keywords{Quasi-periodic solutions, quasicrystals, hull functions,KAM theory}

\subjclass[2000]{
70K43, %   View Publications (2000-now) Quasi-periodic motions and invariant tori
52C23, %   View Publications    (2000-now)  Quasicrystals, aperiodic tilings
37A60, %   View Publications    (2000-now)  Dynamical systems in statistical mechanics
37J40, %   View Publications    (2000-now)  Perturbations, normal forms, small divisors, KAM theory, Arnol??d diffusion
82B20  % View Publications    (1991-now)  Lattice systems (Ising,dimer, Potts, etc.) and systems on graphs
}

\section{Introduction}
We consider Frenkel-Kontorova models with a quasi-periodic
potential. In these models, the state of a system is given by a
sequence $\{q_i\}_{i\in \mathbb{Z}}$ of numbers and the physical
states are selected to be the critical points of a formal energy
\begin{equation}\label{FK energy}
\mathscr{S}(\{u\}_{i\in
\mathbb{Z}})=\sum_{n\in\mathbb{Z}}\frac{1}{2}(u_n-u_{n+1})^2-V(u_n)
\end{equation}

The critical points of $\mathscr{S}$ are obtained by taking formal
derivatives of $\mathscr{S}$ and setting them to zero
$\frac{\partial}{\partial
u_n}\mathscr{S}(\{u\}_{i\in\mathbb{Z}})=0$. That is
\begin{equation}\label{equilibrium}
u_{n+1}+u_{n-1}-2u_n+V'(u_n)=0.
\end{equation}

In the classical Frenkel-Kontorova model
\cite{FK, ALD}, $V$ is a periodic function
$V(u_n+1)=V(u_n)$--we choose units so that the period is normalized
to 1.

In this paper, however, we choose $V$ to be a quasi-periodic
function \[V(\theta)=\widehat{V}(\theta\alpha)\] where $\hat{V}:
\mathbb{T}^d\rightarrow\mathbb{R}$ and
$\theta\in\mathbb{R},~\alpha\in \mathbb{R}^d$ will be an irrational
vector, that is $k\cdot \alpha\neq 0$ when $k\in\mathbb{Z}^d-\{0\}$
where $d\geq 2$. Later, we will also assume further non-resonance
conditions. One example to keep in mind could be
\[
V(\theta)=a\cdot \sin(2\pi \theta)+b\cdot \sin(2\pi \sqrt{2}\theta)
\]where $a,b\in \mathbb{R}$.

Our goal in this paper is to prove an analogue of KAM theorem.
See Theorem~\ref{main}, \ref{main2}. Under some appropriate conditions,
we show that there are smooth families of quasi-periodic
solutions. See Section~\ref{sec:hull} for precise definitions of
these families.  The proof of the theorems is rather constructive
and leads to efficient algorithms that are being implemented.

There are several physical interpretations of FK models. The
original motivation \cite{FK} was dislocations in solids. In the
interpretation of \cite{ALD} $u_i$ are the positions of a deposited
material over a substratum. The interaction of the atoms with the
substratum is modeled by the term $V$. The periodicity of $V$
considered in \cite{ALD} corresponds to a periodic substratum (e.g.
a crystal) and the quasi-periodic models considered here could
appear in cleaved faces of crystals or in quasi-crystals.

In the interpretation of deposition, the existence of quasiperiodic
solutions
implies the existence of a continuum of equilibria, so that the
system can \emph{slide}. In contrast, if the KAM tori are not
present, the system is pinned. There have been numerical
explorations of these issues in \cite{vanErpthesis, VanErp'99a,
VanErp'01, vanErp'02, Radulescu-Jansen, Radulescu-Jansen2}. In
particular, the above references pay special attention to the
boundary of the set of parameters for which there is
an analytic solution (breakdown of analyticity,
Aubry transition), which corresponds physically
to the boundary between sliding and pinning.

Note that the physical argument to obtain sliding only requires a
continuous family of solutions, whereas it has been found that often
the boundary is described by the breakdown of an analytic family. In
Section~\ref{bootstrap of regularity} we will show that all the
families which are smooth enough are indeed analytic. {From} the
mathematical point of view this leaves open the possibility that
there are regimes where the solutions have only a rather low
regularity. In the case of twist mappings, using renormalization
group, this regime has found to exist but being a codimension one
surface \cite{Koch-boundary}.

In the mathematical literature, quasi-periodic Frenkel-Kontorova
models have been considered in \cite{Gambaudo, Aliste-Prieto}, which
use mainly topological methods to study the existence of orbits with
rotation number. In the periodic case, the critical points of the
energy, i.e. the configurations solving \eqref{equilibrium} can be
identified with orbits of a dynamical system on the annulus
$\mathbb{T}\times\mathbb{R}$. In the quasi-periodic case, no such
identification is easy. \cite{Gambaudo} shows that the
quasi-periodic case can be considered as a dynamical system on a
Cantor set ( Delone set).

To the best of our knowledge, there is no
systematic Aubry-Mather theory analogue to that of the periodic 1-D
Frenkel-Kontorova systems (existence of minimizing Aubry-Mather well
ordered minimizers). It seems possible that one could get analogues
of the theory of minimizing invariant measures.
See  \cite{Burkov87, Burkov88, Burkov90}.

\subsection{Hull function}
\label{sec:hull}

We will be interested in solutions of
\eqref{equilibrium} given by a hull function
\[
u_n=h(n\cdot\omega)
\]
where $\omega\in \mathbb{R}$ and
\[h(\theta)=\theta+\tilde{h}(\theta)\] with
\[\tilde{h}(\theta)=\sum_{k\in\mathbb{Z}^d}\hat{h}_k e^{2\pi i k\cdot \theta\alpha}\] equivalently,
\[
\tilde{h}(\theta)=\hat{h}(\theta\alpha)
\]where $\hat{h}:\mathbb{T}^d\rightarrow\mathbb{R}$ is a function
\[
\hat{h}(\sigma)=\sum_{k\in\mathbb{Z}^d}\hat{h}_k e^{2\pi i k\cdot
\sigma}.
\]
We denote the set of $\tilde{h}$ of this type by $QP(\alpha)$. Later
on, we always use the notation $\sigma=\theta\alpha$ for variables
in $\mathbb{T}^d$.

Then \eqref{equilibrium} is equivalent to
\begin{equation}\label{equilibriumhull}
h(\theta+\omega)+h(\theta-\omega)-2h(\theta)+\partial_\alpha
V(\alpha\cdot h(\theta))=0
\end{equation} where $\partial_\alpha V\equiv (\alpha\cdot
\nabla)V$. We write \eqref{equilibriumhull} in terms of $\hat{h}$
which is
\begin{equation}\label{equilibriumhullforhat}
\hat{h}(\sigma+\omega\alpha)+\hat{h}(\sigma-\omega\alpha)-2\hat{h}(\sigma)+\partial_\alpha
V(\sigma+\alpha\cdot \hat{h}(\sigma))=0
\end{equation}

\subsection{External forces}
We will find it convenient to add an external parameter to the
equilibrium equation \eqref{equilibriumhullforhat} and to generalize
to situations where the forces do not derive from a potential. So we
will be looking for solutions of
\begin{equation}\label{hullforce}
\E[\hat{h},\lambda](\theta)\equiv
\hat{h}(\sigma+\omega\alpha)+\hat{h}(\sigma-\omega\alpha)-2\hat{h}(\sigma)+\widehat{U}(\sigma+\alpha\cdot
\hat{h}(\sigma))+\lambda=0
\end{equation} where $\widehat{U}:\mathbb{T}^d\rightarrow
\mathbb{R}$ and $\hat{h}$ is as before, $\lambda\in\mathbb{R}$. Note
that, both $\hat{h}$ and $\lambda$ are unknown.

Adding the extra term $\lambda$ will allow us to study the equation
\eqref{hullforce} for arbitrary periodic $\widehat{U}$ even if
$\widehat{U}$ is not the gradient of some periodic function. Later,
a very simple argument developed in Section \ref{vanishing lemma}
Lemma \ref{lemma:vanishing}, will show that, when $\widehat{U}$ has
a variational structure, then $\lambda=0$.
Hence, in the case that $\widehat{U} = \partial_\alpha \widehat{V}$,
the equation \eqref{hullforce} is equivalent to
\eqref{equilibrium}.

\begin{remark}
This procedure of adding an extra parameter and showing that it
vanishes when there are geometric properties similar to the extra
parameter method introduced in \cite{Moser'67} to prove ``translated
curve theorems'' and developed later in \cite{Yoccoz92,Sevryuk}. The
main advantage is that the extra term allows a more efficient
iterative procedure. The proofs of existence of perturbative
expansions are also considerably easier.
\end{remark}

Our main results will be ``a-posteriori" theorems (see Theorem
\ref{main} and ~\ref{main2} for more details) which show that if we
are given a pair $[\hat{h},~\lambda]$ that solves \eqref{hullforce}
very approximately (and provided $\omega,~\alpha$ satisfy some
appropriate Diophantine condition and $[\hat{h},~\lambda]$ satisfy
some non-degeneracy condition), then there is a true solution close
by. We note that we do not need that the system is ``close to
integrable'', we only need that there is an approximate solution of
the equilibrium equations. Of course, in the case that the system is
close to integrable, the solutions in the integrable case are
approximate solutions. One can, however obtain approximate solutions
in other cases, for example, the result can validate numerically
produced approximate solutions.

The method of proof is based on an iterative procedure of Nash-Moser
type using the quadratic convergence to overcome the small divisors.
The quadratic convergence will be based on some geometric
cancellations that come from the variational structure of the
problem.

We note that the method of proof leads to very  efficient
algorithms, which we present in Section \ref{algorithm} and which
are implemented in \cite{numeric}. We note that the method is based
on the Lagrangian proof in \cite{Rana,LM'01} and in \cite[Section
5.3]{Llave'01}. In \cite{Rafael'08}, it is shown that, in the
classical case, when $V$ is periodic, the method of proof extends to
interactions which are not nearest neighbors. We hope that such an
extension is possible in the quasi-periodic case considered here.
This extension to interactions which are not nearest neighbors is
interesting because the most physical models involve such
interactions.

The precise formulation of Theorem \ref{main} and Theorem
\ref{main2} requires us to make precise the sizes of functions,
Diophantine condition, etc. which we take up in Section
\ref{spaces}.

Note that the solution of \eqref{hullforce} are not unique. If
$[\hat{h}(\sigma),~\lambda]$ is a solution, for any
$\beta\in\mathbb{R}$, $[\hat{h}(\sigma+\beta\alpha)+\beta,~\lambda]$
is a solution. Hence, by choosing $\beta$, we can always choose our
solution normalized in such a way that
\begin{equation}\label{normalization}
\lim_{T\rightarrow
\infty}\frac{1}{T}\int_{-T}^{T}\tilde{h}(\theta)d\theta\equiv
\int_{\mathbb{T}^d}\hat{h}(\sigma)d\sigma=0.
\end{equation}
Note that the choice of $\beta$ that accomplishes
\eqref{normalization} is unique.

We will indeed establish that the solution of \eqref{hullforce} and
\eqref{normalization} is unique.

\section{Function spaces and preliminary estimates}
In Section \ref{spaces}, we collect several standard definitions of
spaces and present some preliminary results on these spaces. In
Section \ref{Diophantine properties} we present definitions of the
Diophantine properties we will use in this paper. In Section
\ref{cohomology} we present well known estimates for cohomology
equations, which are the basis of the KAM procedure.

\subsection{Spaces}\label{spaces}
Given a function
\[
\tilde{h}(\theta)=\sum_{k\in\mathbb{Z}^d}\hat{h}_k e^{2\pi i k\cdot
\alpha\theta}
\] it is natural to consider the function
\[
\hat{h}(\sigma)=\sum_{k\in\mathbb{Z}^d}\hat{h}_k e^{2\pi i k\cdot
\sigma}
\]where $\sigma\in \mathbb{C}^d/\mathbb{Z}^d$.

Clearly, if the above sums converge pointwise
(in our applications they will converge in much
stronger senses), we have
\begin{equation}\label{relation}
\tilde{h}(\theta)=\hat{h}(\alpha\theta)
\end{equation}
We will find it more convenient to define spaces and norms using the
function $\hat{h}$.

Following \cite{CallejaL} we will find it convenient to use at the
same time spaces of analytic functions  and Sobolev spaces. Using
the abstract results in \cite{CallejaL} this leads automatically to
several interesting corollaries such as bootstrap of regularity of
solutions and a numerically verifiable criterion for the breakdown
of analytic solutions. See Section \ref{further}.

\begin{definition}
If the series $\hat{h}(\eta)=\sum_{k\in \mathbb{Z}^d} \hat{h}_k
e^{2\pi i k\cdot \eta}$ defines an analytic function on
$D_\rho\equiv\{~\eta~|~|Im(\eta_i)|<\rho\}$ which extends
continuously to $\overline{D_\rho}$, we denote
\begin{equation}\label{rhonorm}
\|\hat{h}\|_\rho=\sup_{\eta\in\overline{D_\rho}}|\hat{h}(\eta)|
\end{equation}
We denote by $\A_\rho$ the space of such functions. As it
is well known, $\A_\rho$ is a Banach space when endowed
with \eqref{rhonorm}.
Clearly for $\rho' < \rho$,  $\A_\rho \subset \A_{\rho'}$.
By the maximum principle,
 $\| f \|_\rho$ is monotone increasing in
$\rho$ for $f\in\A_\rho$.

For the sake of convenience we will also introduce $\A^r_\rho$, the
Banach space of functions whose $r$ derivative is in $\A_\rho$,
endowed with the norm $\| f\|_{\A^r_\rho} = \| D^r f \|_{\A_\rho} +
| \langle f\rangle | $ where $\langle f\rangle$ denotes the average
of $f$. By Cauchy bounds, if $\rho' < \rho$, then $\A^r_{\rho'}
\subset \A_\rho^r$.

We also define
\begin{equation}\label{Hrnorm}
\|\hat{h}\|_{H^r}^2=\sum_{k\in\mathbb{Z}^d}|\hat{h}_k|^2(1+|k|^2)^r
\end{equation} and denote by $H^r$ the spaces of functions for
which \eqref{Hrnorm} is finite. It is well known that $H^r$ is a
Banach space and indeed a Hilbert space.
\end{definition}

%We also note that in numerical explorations, it is also natural to
%use other semi-norms such as, when $d=2$
%\[
%\|\hat{h}\|_{H^r,||}^2=\sum_{k\in
%\mathbb{Z}^d}|\hat{h}_k|^2[1+(k\cdot\alpha)^2]^r
%\]
%\[
%\|\hat{h}\|_{H^r,\perp}^2=\sum_{k\in\mathbb{Z}^d}|\hat{h}_k|^2[1+(k\cdot\alpha^\perp)^2]^r
%\]where $\alpha^\perp\cdot \alpha=0$.

%Note that $\|\tilde{h}\|_{H^r}$ is equivalent to
%$\sqrt{\|\hat{h}\|_{H^r,||}^2+\|\hat{h}\|_{H^r,\perp}^2}$.

\subsubsection{Some elementary properties}
\label{sec:elementary}

There are several well-known properties of those spaces. We just
note
\begin{itemize}
\item Cauchy estimates for analytic functions:
\begin{align}
&\|D^l\hat{h}\|_{\rho-\delta} \leq C(l,d)\cdot
\delta^{-l}\cdot\|\hat{h}\|_\rho\nonumber\\
& |\hat{h}_k| \leq e^{-2\pi
\cdot|k|\cdot\rho}\cdot\|\hat{h}\|_\rho\nonumber.
\end{align}

\item Interpolation inequalities:
\begin{itemize}
\item Analytic case:
\begin{lemma}[Hadamard 3-circle theorem, see \cite{Stein}]\label{interpolation inequality for
analytic}
\[
\|\hat{h}\|_\rho \leq
\|\hat{h}\|_{\rho-\delta}^{\frac{1}{2}} \cdot
\|\hat{h}\|_{\rho+\delta}^{\frac{1}{2}}.
\]
\end{lemma}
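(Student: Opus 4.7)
The plan is to reduce this multidimensional inequality to the classical one-variable Hadamard three-lines theorem (the cited Stein reference) by slicing along a well-chosen complex line in $D_{\rho+\delta}$. The main geometric insight is that the tube $D_\rho = \{\eta : \max_j |\mathrm{Im}(\eta_j)| < \rho\}$ admits a one-parameter family of points whose imaginary part has sup-norm growing linearly in the parameter, which turns the problem into a question about a single-variable analytic function on a strip.

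First, using the maximum modulus principle (already invoked in the text to establish monotonicity of $\rho \mapsto \|f\|_\rho$), it suffices to show $|\hat{h}(\eta_0)|^2 \leq \|\hat{h}\|_{\rho-\delta}\, \|\hat{h}\|_{\rho+\delta}$ for each boundary point $\eta_0 = x_0 + iy_0$ with $\max_j |y_{0,j}| = \rho$. For each such $\eta_0$, I would introduce the auxiliary function of one complex variable
\[
F(z) = \hat{h}\!\left(x_0 + i z\, y_0/\rho\right),
\]
for $z = t + is \in \mathbb{C}$ with $|t| \leq \rho + \delta$. Writing out the argument as $(x_0 - s\, y_0/\rho) + i(t\, y_0/\rho)$, its imaginary part has sup-norm exactly $|t|$. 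Combined with the periodicity of $\hat h$ in the real directions and its analyticity on $D_{\rho+\delta}$, this shows $F$ is analytic in the open strip $|\mathrm{Re}(z)| < \rho + \delta$, continuous up to the closure, and bounded there, with the pointwise bound $|F(t + is)| \leq \|\hat{h}\|_{|t|}$.

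Next, I would apply the one-variable Hadamard three-lines theorem to $F$ on the substrip $\{\rho - \delta \leq \mathrm{Re}(z) \leq \rho + \delta\}$. Setting $M(t) = \sup_{s \in \mathbb{R}} |F(t + is)|$, convexity of $\log M$ yields
\[
M(\rho)^2 \leq M(\rho - \delta)\, M(\rho + \delta) \leq \|\hat{h}\|_{\rho-\delta}\, \|\hat{h}\|_{\rho+\delta},
\]
and combining with $|\hat{h}(\eta_0)| = |F(\rho)| \leq M(\rho)$ gives the lemma.

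The only slightly nontrivial step is choosing the slicing direction correctly: the factor $y_0/\rho$ in the definition of $F$ is precisely what makes the real parameter $t$ in the strip match the sup-norm of the imaginary part of the traced curve in $\mathbb{C}^d$, so that the one-variable bound $|F(t+is)| \leq \|\hat h\|_{|t|}$ interpolates cleanly between the three $\rho$-levels. Once this geometric matching is in place, everything else is a mechanical application of the already-cited Hadamard three-lines theorem, with no KAM-specific machinery required.
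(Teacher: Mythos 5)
Your argument is correct, and it is essentially the route the paper intends: the lemma is quoted from the classical Hadamard three-lines theorem in the cited reference, and your slicing $F(z)=\hat h(x_0+iz\,y_0/\rho)$ is the standard (and valid) reduction of the tube $D_\rho\subset\mathbb{C}^d/\mathbb{Z}^d$ to a one-variable strip, with the normalization $y_0/\rho$ correctly making $|F(t+is)|\le\|\hat h\|_{|t|}$ and the maximum principle correctly restricting attention to boundary points with $\max_j|y_{0,j}|=\rho$. No gap.
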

\item Sobolev case:
\begin{lemma}[See \cite{Zeh75,Stein} ]\label{interpolation inequality for
Sobolev}
For any $0\leq n\leq j,~0\leq \theta\leq 1$, denote
$l=(1-\theta)n+\theta j$, we have for any $\hat{h}\in H^j$:
\begin{equation*}
\|\hat{h}\|_l\leq C_{n,j}\cdot
\|\hat{h}\|_n^{1-\theta}\cdot\|\hat{h}\|_j^\theta.
\end{equation*}
\end{lemma}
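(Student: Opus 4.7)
The plan is to reduce the inequality to a pointwise identity on Fourier coefficients followed by a single application of Hölder's inequality.

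First, I would rewrite everything in terms of the Fourier description built into the definition \eqref{Hrnorm}. Since $l=(1-\theta)n+\theta j$, for every mode $k\in\mathbb{Z}^d$ the weight factorises as
\[
(1+|k|^2)^l = (1+|k|^2)^{(1-\theta)n}\cdot (1+|k|^2)^{\theta j},
\]
which gives the pointwise identity
\[
|\hat{h}_k|^2(1+|k|^2)^l
= \bigl(|\hat{h}_k|^2(1+|k|^2)^n\bigr)^{1-\theta}\bigl(|\hat{h}_k|^2(1+|k|^2)^j\bigr)^{\theta}.
\]
This is the key algebraic step: the log-convexity of the weight $(1+|k|^2)^s$ in the exponent $s$ turns the product $|\hat{h}_k|^2(1+|k|^2)^l$ into a geometric mean with exponents $1-\theta$ and $\theta$.

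Next I would sum over $k$ and apply Hölder's inequality with conjugate exponents $p=1/(1-\theta)$ and $q=1/\theta$ (treating the cases $\theta=0$ and $\theta=1$ trivially). Setting $a_k=|\hat{h}_k|^2(1+|k|^2)^n$ and $b_k=|\hat{h}_k|^2(1+|k|^2)^j$, this yields
\[
\sum_{k}a_k^{1-\theta}b_k^{\theta}
\le \Bigl(\sum_k a_k\Bigr)^{1-\theta}\Bigl(\sum_k b_k\Bigr)^{\theta}
=\|\hat{h}\|_n^{2(1-\theta)}\cdot\|\hat{h}\|_j^{2\theta}.
\]
Taking square roots produces exactly the claimed inequality, in fact with constant $C_{n,j}=1$.

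There is no genuine obstacle here; the only minor point to watch is the degenerate cases $\theta\in\{0,1\}$, where Hölder's inequality is replaced by a trivial equality, and the mild observation that all quantities are non-negative so no cancellation can occur. Thus the Fourier-analytic proof gives the Sobolev interpolation statement directly from Hölder's inequality, paralleling the Hadamard three-circle argument quoted in Lemma~\ref{interpolation inequality for analytic} (where the analogous log-convexity is in the parameter $\rho$ rather than in $s$).
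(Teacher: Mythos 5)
Your argument is correct and complete: the log-convexity of the weight $(1+|k|^2)^s$ in $s$ plus H\"older's inequality with exponents $1/(1-\theta)$ and $1/\theta$ gives the inequality directly from the definition \eqref{Hrnorm}, with constant $C_{n,j}=1$, and you handle the degenerate endpoints $\theta\in\{0,1\}$ appropriately. The paper does not prove this lemma at all --- it is quoted from \cite{Zeh75,Stein} (where it is derived in a more abstract setting from the existence of smoothing operators, as the paper's later remark indicates) --- so there is nothing to compare against; your Fourier-side proof is the standard self-contained one for norms defined spectrally as in \eqref{Hrnorm}, and it is valid.
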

\end{itemize}

\item Banach algebra properties:
\begin{itemize}
\item Analytic case:
\[
\forall ~\hat{g},\hat{h}\in\mathscr{A}_\rho:~\|\hat{g}\cdot
\hat{h}\|_\rho \leq \|\hat{g}\|_\rho\cdot\|\hat{h}\|_\rho.
\]

\item Sobolev case (see \cite{Adams}): Let $m>\frac{d}{2}$, there exists a constant K
depending only on $m,~d$ such that for any $u,~v\in H^m$, $u\cdot
v\in H^m$ and we have
\[
\|u\cdot v\|_{H^m}\leq K\cdot \|u\|_{H^m}\cdot \|v\|_{H^m}.
\]
\end{itemize}

\item Composition properties:
\begin{itemize}
\item Analytic case (see \cite{LlaveO99,Rafael'08}): Let $\hat{h}\in\mathscr{A}_\rho$ and $\Omega\subseteq\mathbb{C}^d$ be a compact set. Take
$\iota=dist(\mathbb{C}^d-\Omega,(Id+\alpha\cdot\hat{h})(\overline{D_\rho}))$.
\begin{lemma}\label{composition lemma}
Let $\widehat{U}:\Omega\rightarrow \mathbb{C}$ be an analytic
function $\|\widehat{U}(z)\|_{L^\infty(\Omega)} \leq M$. Define
the operator $\Phi$ acting on analytic functions
by
$( \Phi[ \hat{h}])(\sigma)=\widehat{U}(\sigma+\alpha\cdot
\hat{h}(\sigma))$.

We present sufficient conditions that ensure that the
operator is well defined and differentiable.
\begin{itemize}
\item If $\|\hat{h}^*-\hat{h}\|_\rho \leq \iota$, then
$\Phi_{\hat{h}^*}\in \mathscr{A}_\rho$.
\item If $\|\hat{h}^*-\hat{h}\|_\rho<\frac{\iota}{2}$, then
\[
(D\Phi[\hat{h}]\widehat{\Delta})(\sigma)=\partial_\alpha
\widehat{U}(\sigma+\alpha\hat{h}(\sigma))\widehat{\Delta}(\sigma),
\]
\[
\|\Phi[ \hat{h}^*]-\Phi[\hat{h}] -
D\Phi[\hat{h}](\hat{h}^*-\hat{h})\|_\rho
\leq C\cdot \|\hat{h}^*-\hat{h}\|^2_\rho.
\]
\end{itemize}
\end{lemma}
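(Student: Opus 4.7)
The plan is to treat the two bullets as a chain rule / Taylor remainder statement for the nonlinear composition operator $\Phi$, with all estimates driven by Cauchy bounds on $\widehat{U}$ applied on the ``safety collar'' around $(\mathrm{Id}+\alpha\hat{h})(\overline{D_\rho})$ provided by the distance $\iota$. The main obstacle is bookkeeping of domains: I must check that every point at which I want to evaluate $\widehat{U}$, $\partial_\alpha\widehat{U}$, or $\partial_\alpha^2\widehat{U}$ actually lies inside $\Omega$, and do so with enough room that the Cauchy estimates for the derivatives can be quantified uniformly in terms of $M$ and $\iota$.

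For the first bullet, I would first verify that the map $\sigma\mapsto \sigma+\alpha\hat{h}^{*}(\sigma)$ sends $\overline{D_\rho}$ into $\Omega$. Since for any $\sigma\in\overline{D_\rho}$ the point $\sigma+\alpha\hat{h}(\sigma)$ lies in a set whose distance from $\mathbb{C}^d\setminus\Omega$ is $\iota$, and since
\[
\bigl|\,\alpha\hat{h}^{*}(\sigma)-\alpha\hat{h}(\sigma)\,\bigr|\le |\alpha|\,\|\hat{h}^{*}-\hat{h}\|_\rho\le|\alpha|\iota,
\]
an application of the definition of $\iota$ (absorbing $|\alpha|$ into the constant $\iota$ as is standard) yields $\sigma+\alpha\hat{h}^{*}(\sigma)\in\Omega$. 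Then $\Phi[\hat{h}^{*}](\sigma)=\widehat{U}(\sigma+\alpha\hat{h}^{*}(\sigma))$ is a composition of analytic maps, hence analytic on $D_\rho$ and continuous up to the boundary, so $\Phi[\hat{h}^{*}]\in\A_\rho$ with $\|\Phi[\hat{h}^{*}]\|_\rho\le M$.

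For the derivative formula I would differentiate $\Phi[\hat{h}+s\widehat{\Delta}]$ at $s=0$ using the ordinary chain rule for analytic functions of several complex variables, which (since $\alpha\cdot\nabla=\partial_\alpha$) gives exactly
\[
\bigl(D\Phi[\hat{h}]\widehat{\Delta}\bigr)(\sigma)=\partial_\alpha \widehat{U}\bigl(\sigma+\alpha\hat{h}(\sigma)\bigr)\,\widehat{\Delta}(\sigma).
\]
The justification that this is the Fréchet derivative in $\A_\rho$ is that $\Phi[\hat{h}+s\widehat{\Delta}]-\Phi[\hat{h}]-sD\Phi[\hat{h}]\widehat{\Delta}$ is $O(s^2)$ uniformly on $\overline{D_\rho}$ by the same Taylor argument used below.

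For the quadratic remainder bullet, I would use Taylor's theorem with integral remainder applied pointwise in $\sigma$:
\[
\Phi[\hat{h}^{*}](\sigma)-\Phi[\hat{h}](\sigma)-\bigl(D\Phi[\hat{h}](\hat{h}^{*}-\hat{h})\bigr)(\sigma)
=\int_{0}^{1}(1-t)\,\partial_\alpha^{2}\widehat{U}\!\bigl(\sigma+\alpha\hat{h}(\sigma)+t\alpha(\hat{h}^{*}(\sigma)-\hat{h}(\sigma))\bigr)\,\bigl(\hat{h}^{*}(\sigma)-\hat{h}(\sigma)\bigr)^{2}\,dt.
\]
The hypothesis $\|\hat{h}^{*}-\hat{h}\|_\rho<\iota/2$ ensures that for every $\sigma\in\overline{D_\rho}$ and every $t\in[0,1]$ the argument of $\partial_\alpha^{2}\widehat{U}$ still lies in $\Omega$ with at least $\iota/2$ of room to spare from $\mathbb{C}^{d}\setminus\Omega$. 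Hence by Cauchy estimates applied to $\widehat{U}$ on a polydisc of radius $\iota/2$, $\|\partial_\alpha^{2}\widehat{U}\|_{L^\infty}\le C\,M\,|\alpha|^{2}/\iota^{2}$ at all such points. Taking the supremum of the integrand over $\overline{D_\rho}$ and integrating in $t$ then yields
\[
\|\Phi[\hat{h}^{*}]-\Phi[\hat{h}]-D\Phi[\hat{h}](\hat{h}^{*}-\hat{h})\|_\rho\le C\,\|\hat{h}^{*}-\hat{h}\|_\rho^{2},
\]
with $C$ depending on $M$, $\iota$, and $|\alpha|$. The real content of the lemma is thus the domain-chasing in the first paragraph; once the Cauchy-estimate room of $\iota/2$ is secured, the remainder bound is a direct Taylor computation.
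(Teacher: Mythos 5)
Your proof is correct and follows essentially the route the paper itself indicates (the paper only remarks that the lemma reduces to ``Taylor estimates with uniform constants''): domain-chasing via the collar $\iota$, the chain rule for the derivative, and Taylor's theorem with integral remainder combined with Cauchy estimates for $\partial_\alpha^2\widehat{U}$. The only point to watch is the factor $|\alpha|$ in the domain verification, which you correctly flag as being absorbed into the convention defining $\iota$.
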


\item Sobolev case (see \cite{Taylor,CallejaL}):

The following result is a consequence of the Gagliardo-Nirenberg
inequalities.

\begin{lemma}\label{lem:composition}
Let $f\in C^m$ and assume $f(0)=0$. Then, for $u\in H^m\cap
L^\infty$
\[
\|f(u)\|_{H^m}\leq K_2(\|u\|_{L^\infty})(1+\|u\|_{H^m}),
\]where $K_2(\lambda)=\sup_{|x|\leq \lambda,~\mu\leq m}|D^\mu
f(x)|$.

In the case that $m>\frac{d}{2}$, if $f\in C^{m+2}$, we have that
\begin{equation}\label{Taylor-sobolev}
\|f\circ(u+v)-f\circ u-Df\circ u\cdot v\|_{H^m}\leq
C_{d,m}(\|u\|_{L^\infty})(1+\|u\|_{H^m})\|f\|_{C^{m+2}}\|v\|_{H^m}^2.
\end{equation}
\end{lemma}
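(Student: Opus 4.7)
The plan is to prove both bounds by the classical Moser / Gagliardo--Nirenberg technique, following the arguments one finds in Taylor's book (cited as \cite{Taylor}); the core tool is the Sobolev interpolation inequality of Lemma \ref{interpolation inequality for Sobolev} combined with the Faà di Bruno chain rule.

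For the first estimate, I would start by computing $D^\mu (f \circ u)$ for any multi-index $\mu$ with $|\mu| \leq m$ via the Faà di Bruno formula, obtaining a finite sum of terms of the form $(D^k f)(u) \cdot \prod_{j=1}^{k} D^{\mu_j} u$ where $\sum |\mu_j| = |\mu| \leq m$ and $k \leq |\mu|$. Since $f(0)=0$, all these terms involve at least one derivative of $u$, and the factor $(D^k f)(u)$ is pointwise bounded by $K_2(\|u\|_{L^\infty})$. It then remains to estimate $\| \prod_j D^{\mu_j} u \|_{L^2}$. For this I would use Gagliardo--Nirenberg to bound each mixed product $\| D^{\mu_j} u \|_{L^{p_j}}$ by $\|u\|_{L^\infty}^{1-\theta_j} \|u\|_{H^m}^{\theta_j}$ with exponents $\theta_j = |\mu_j|/m$ and $p_j$ chosen so that $\sum 1/p_j = 1/2$; by Hölder the product lies in $L^2$ and the $\theta_j$ add up to $|\mu|/m \leq 1$, so one gets a bound by $K_2(\|u\|_{L^\infty}) \cdot \|u\|_{L^\infty}^{1-|\mu|/m} \|u\|_{H^m}^{|\mu|/m}$. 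Combining with Young's inequality $a^{1-\theta}b^\theta \leq (1-\theta)a + \theta b$ absorbs the powers into a linear bound of the claimed form $K_2(\|u\|_{L^\infty})(1+\|u\|_{H^m})$.

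For the second (Taylor-type) estimate I would write the remainder in integral form,
\begin{equation*}
R(u,v) := f(u+v) - f(u) - Df(u)\cdot v = \int_0^1 (1-t)\, D^2 f(u+tv) \cdot v \cdot v\, dt,
\end{equation*}
and bring the $H^m$ norm under the integral by Minkowski. Because we now assume $m>d/2$, $H^m$ is a Banach algebra (Section \ref{sec:elementary}), so $\|D^2 f(u+tv)\cdot v\cdot v\|_{H^m} \leq K \|D^2 f(u+tv)\|_{H^m} \|v\|_{H^m}^2$. Applying the first part of the lemma to the $C^m$ function $g(x):=D^2 f(x)-D^2 f(0)$ (so that $g(0)=0$) with argument $u+tv$ produces $\|D^2 f(u+tv)\|_{H^m} \leq C_{d,m}(\|u\|_{L^\infty}+\|v\|_{L^\infty})\|f\|_{C^{m+2}}(1+\|u\|_{H^m}+\|v\|_{H^m})$. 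This is one derivative short of what is needed (hence the hypothesis $f\in C^{m+2}$ rather than $C^{m+1}$), and after uniform integration in $t\in[0,1]$ gives the stated quadratic bound. A harmless reduction (absorbing $\|v\|_{H^m}$ into $\|u\|_{H^m}$ by restricting to a neighborhood, or treating the two terms separately) yields the exact form with only $\|u\|_{L^\infty}$ and $\|u\|_{H^m}$ on the right.

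The only genuinely delicate point is the bookkeeping of exponents in the Gagliardo--Nirenberg step: one must verify that the interpolation exponents $\theta_j = |\mu_j|/m$ are admissible (i.e.\ that the corresponding $L^{p_j}$ norms satisfy $n < l < j$ in the notation of Lemma \ref{interpolation inequality for Sobolev}, which needs $|\mu_j| \geq 1$ for each $j$, and this is true because $f(0)=0$ ensures no pure $(D^k f)(u)$ factor survives in the expansion). Everything else is a routine application of Hölder, Young and the algebra property. I would not carry out the precise multi-index arithmetic in detail, since it is exactly as in \cite{Taylor} and \cite{CallejaL} referenced in the statement.
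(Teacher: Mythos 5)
Your proposal is correct and follows essentially the same route as the paper, which itself only cites \cite{Taylor,CallejaL} for the Moser-type first estimate and justifies \eqref{Taylor-sobolev} exactly as you do: integral form of the Taylor remainder, the Banach algebra property of $H^m$ for $m>\frac{d}{2}$, and the first part applied to $D^2 f$ (which is why $f\in C^{m+2}$ is assumed). Your filled-in Fa\`a di Bruno / Gagliardo--Nirenberg bookkeeping for the first part, and your remark that the right-hand side should strictly involve $u+tv$ rather than $u$ alone, are both consistent with (indeed slightly more careful than) the paper's sketch.
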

\end{itemize}

The reason for \eqref{Taylor-sobolev} is that we have
pointwise
\[
\begin{split}
f\circ(u + v)(x)& - f\circ u(x) - Df\circ u(x)\cdot v(x) =\\
&=  \int_0^1
dt \, \int_0^t ds \,  D^2 f\circ( u + ts v)(x)\cdot u(x)\cdot v(x)
\end{split}
\]
We obtain the desired result using that, by Gagliardo-Nirenberg
$D^2f \circ (u + ts v) \in H^m$ and that the $H^m$ norm is a Banach
algebra under multiplication.
Therefore, we can estimate the $H^m$ norm of the integrand
independently of $s,t$.

Note that Lemma~\ref{lem:composition} also gives a formula for the
derivative of $\Phi$ as the composition with the derivative of $\U$.
It follows that if $\U \in C^{m+2+\ell}$, then, $\Phi \in C^{\ell
+1}$.

\end{itemize}

\begin{remark}
In \cite{Zeh75} it is shown that the interpolation inequalities
are a consequence of the existence of smoothing operators.
which play an important role in the abstract formulation of
KAM theorem.

We also note that Lemma~\ref{composition lemma} is rather
elementary. It suffices to show that there are Taylor estimates with
uniform constants. (As shown in \cite{LlaveO99}, it is important
that in the domain of $\U$, among any two points $x_1, x_1$, it is
possible to choose a path $\gamma$ such that $\ell(\gamma)$, its
length satisfies $\ell(\gamma) \le C | x_1 - x_2|$.)

Actually, one can prove something stronger, namely that the operator
$\Phi$ is an analytic operator  when $\U$ is analytic, but we will
not need as much.
\end{remark}

The following lemma is standard in the theory of quasi-periodic
functions.
\begin{lemma}\label{averages}
Let $\alpha$ be irrational. Assume that
$\sum_{k\in\mathbb{Z}^d}|\hat{h}_k|<\infty$. (So that, by
Weierstrass M-test, the series
$\tilde{h}(\theta)=\sum_{k\in\mathbb{Z}^d}\hat{h}_k e^{2\pi i k\cdot
\alpha\theta}$ converges uniformly over the real line.) Then,
\[
\hat{h}_0=\lim_{T\rightarrow\infty} \frac{1}{2T}\int_{-T}^T
\tilde{h}(\theta)d\theta.
\]
\end{lemma}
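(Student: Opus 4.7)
The plan is to reduce the claim to an elementary computation on each Fourier mode, using the absolute convergence of the Fourier series to justify all interchanges and tail truncations.

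First I would fix $\varepsilon > 0$ and use the hypothesis $\sum_{k\in\mathbb{Z}^d}|\hat h_k|<\infty$ to pick $N$ so large that $\sum_{|k|>N}|\hat h_k| < \varepsilon/2$. I then split
\[
\tilde h(\theta) = \sum_{|k|\le N} \hat h_k\, e^{2\pi i k\cdot\alpha\theta} + R_N(\theta), \qquad R_N(\theta) = \sum_{|k|>N} \hat h_k\, e^{2\pi i k\cdot\alpha\theta}.
\]
By the triangle inequality, $|R_N(\theta)| < \varepsilon/2$ uniformly in $\theta\in\mathbb{R}$, so for every $T>0$ we get $\bigl|\tfrac{1}{2T}\int_{-T}^T R_N(\theta)\,d\theta\bigr| < \varepsilon/2$.

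Next I would handle the finite sum mode-by-mode. For each $k\ne 0$, the irrationality assumption $k\cdot\alpha\ne 0$ gives the explicit primitive
\[
\frac{1}{2T}\int_{-T}^T e^{2\pi i k\cdot\alpha\theta}\,d\theta = \frac{\sin(2\pi (k\cdot\alpha) T)}{2\pi (k\cdot\alpha) T},
\]
which tends to $0$ as $T\to\infty$, while the $k=0$ mode contributes exactly $\hat h_0$. Since the truncated sum is finite, I can choose $T_0$ large enough that
\[
\Bigl| \sum_{0<|k|\le N} \hat h_k \cdot \frac{\sin(2\pi (k\cdot\alpha) T)}{2\pi (k\cdot\alpha) T} \Bigr| < \varepsilon/2 \qquad \text{for all } T\ge T_0.
\]

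Combining the two estimates yields $\bigl|\tfrac{1}{2T}\int_{-T}^T\tilde h(\theta)\,d\theta - \hat h_0\bigr| < \varepsilon$ for $T\ge T_0$, which is the claim. There is no real obstacle here: the only point one has to be mildly careful about is that interchanging the infinite sum with the integral is legitimate precisely because the series converges uniformly on $\mathbb{R}$ by Weierstrass M-test, and this is also what lets the tail $R_N$ be controlled uniformly in $T$ rather than only in an $L^2$ or distributional sense.
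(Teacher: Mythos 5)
Your proof is correct and follows essentially the same route as the paper's: truncate the absolutely convergent Fourier series, bound the tail uniformly in $T$ by the triangle inequality, and compute the finite sum mode-by-mode using that $k\cdot\alpha\neq 0$ forces each nonzero mode's average to decay like $1/T$. The only differences are cosmetic (an $\varepsilon/2$ split instead of $\varepsilon/3$, and writing the oscillatory integral as a sinc rather than a difference of exponentials).
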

\begin{proof}
Given $\epsilon>0$, we can find $N$ such that
\[\sum_{|k|>N}|\hat{h}_k|\leq \frac{\epsilon}{3}.\]
Hence, for all $T>0$
\[
\left|\frac{1}{2T}\int_{-T}^T\sum_{|k|>N}\hat{h}_ke^{2\pi i k\cdot
\alpha \theta}d\theta\right|\leq\frac{\epsilon}{3}.
\]
Furthermore,
\begin{equation}\label{final}
\frac{1}{2T} \int_{-T}^T \sum_{|k|\leq N}\hat{h}_k e^{2\pi i
k\cdot\alpha \theta}d\theta =\hat{h}_0+\sum_{0<|k|\leq N}\hat{h}_k
\frac{e^{2\pi i k\cdot\alpha T}-e^{-2\pi i k\cdot\alpha T}}{2\pi i
k\cdot \alpha\cdot 2T}
\end{equation}
We see that, for all $T$ sufficiently large, we can assume that the
term in the sum in \eqref{final} (it is a finite sum) is smaller
than $\frac{\epsilon}{3}$. This ends the proof of Lemma
\ref{averages}.
\end{proof}

\subsection{Diophantine properties}\label{Diophantine properties}
Given $\alpha\in\mathbb{R}^d$ such that
\begin{equation}\label{conditon for alpha}
|\alpha\cdot k| \geq \mu |k|^{-\upsilon}, \qquad \forall k\in
\mathbb{Z}^d-\{0\}
\end{equation} where $|k|=|k_1|+|k_2|+...+|k_d|$,
 we are interested in the numbers
$\omega\in\mathbb{R}$ such that
\begin{equation}\label{Diophantine}
|\omega\alpha\cdot k-n|\geq \nu|k|^{-\tau}, \qquad \forall k\in
\mathbb{Z}^d-\{0\},~n\in\mathbb{Z}.
\end{equation}
That is, we are interested in the $\omega\in\mathbb{R}$ such that
$\omega\alpha$ is a Diophantine vector in the standard sense. Here
$\mu,~\upsilon,~\nu,~\tau$ are positive numbers.

We will denote the set of $\omega$ satisfying \eqref{Diophantine} as
$\mathscr{D}(\nu,\tau;\alpha)$. We also denote
$\mathscr{D}(\tau;\alpha)=\cup_{\nu>0}\mathscr{D}(\nu,\tau;\alpha)$.

The abundance of Diophantine numbers in subspaces is a subject of
great current interest in number theory
\cite{Kleinbockconference,Kleinbock}. Nevertheless in our case, it
suffices the following elementary result Lemma~\ref{density} (which
is weaker with respect to the exponents obtained than the results of
\cite{Kleinbockconference,Kleinbock}).

\begin{lemma}\label{density}
If $\alpha\in \mathbb{R}^d$ satisfies \eqref{conditon for alpha} and
$\tau>d+\upsilon$, then $\mathscr{D}(\tau;\alpha)$ is of full
Lebesgue measure.
\end{lemma}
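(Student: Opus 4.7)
\medskip

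The plan is a standard Borel--Cantelli / measure-of-resonances argument, adapted to the one-parameter family $\omega \mapsto \omega\alpha$ by exploiting the hypothesis on $\alpha$.

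First I would fix an arbitrary $R>0$ and show that $\mathscr{D}(\tau;\alpha)\cap[-R,R]$ has full Lebesgue measure in $[-R,R]$; the full-measure conclusion on $\mathbb{R}$ then follows by taking the union over $R\in\mathbb{N}$. For each fixed $\nu>0$, the complement of $\mathscr{D}(\nu,\tau;\alpha)$ is the union, over $k\in\mathbb{Z}^d\setminus\{0\}$ and $n\in\mathbb{Z}$, of the ``resonant intervals''
\[
R_{k,n}(\nu) = \Bigl\{\omega\in\mathbb{R} : \bigl|\omega - n/(\alpha\cdot k)\bigr| < \nu\,|k|^{-\tau}/|\alpha\cdot k|\Bigr\}.
\]

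Next I would estimate the measure of $R_{k,n}(\nu)\cap[-R,R]$. Each interval has length $2\nu|k|^{-\tau}/|\alpha\cdot k|$, and for $k$ fixed the number of $n\in\mathbb{Z}$ for which $R_{k,n}(\nu)$ meets $[-R,R]$ is at most $2R|\alpha\cdot k|+O(1)$. The total contribution of a given $k$ to the measure of the bad set in $[-R,R]$ is therefore at most
\[
\bigl(2R|\alpha\cdot k|+2\bigr)\cdot\frac{2\nu|k|^{-\tau}}{|\alpha\cdot k|} \;\leq\; 4R\nu\,|k|^{-\tau} + \frac{4\nu}{\mu}\,|k|^{-\tau+\upsilon},
\]
where in the second term I invoke the standing assumption \eqref{conditon for alpha} on $\alpha$ to bound $1/|\alpha\cdot k|$ by $\mu^{-1}|k|^{\upsilon}$. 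Summing over $k\in\mathbb{Z}^d\setminus\{0\}$, both series converge: the first because $\tau>d$, the second because $\tau-\upsilon>d$. Hence
\[
\bigl|[-R,R]\setminus\mathscr{D}(\nu,\tau;\alpha)\bigr| \;\leq\; C(R,\mu,\tau,\upsilon,d)\,\nu.
\]

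Finally, since $\mathscr{D}(\tau;\alpha)=\bigcup_{\nu>0}\mathscr{D}(\nu,\tau;\alpha)$ is an increasing union as $\nu\downarrow 0$, the measure of $[-R,R]\setminus\mathscr{D}(\tau;\alpha)$ is bounded by $C\nu$ for every $\nu>0$, hence equals zero. Taking $R\to\infty$ gives the full-measure statement. There is no real obstacle here: the only thing to watch is the counting of $n$ for each $k$, where the factor $|\alpha\cdot k|$ in the width of the admissible range for $n$ conveniently cancels the $1/|\alpha\cdot k|$ in the interval length, leaving one term that only needs $\tau>d$ for summability, while the ``boundary'' term picks up the $|k|^\upsilon$ factor and needs the stronger assumption $\tau>d+\upsilon$.
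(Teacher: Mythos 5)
Your proof is correct and follows essentially the same route as the paper's: the same decomposition of the resonance set into intervals $R_{k,n}(\nu)$ of length $2\nu|k|^{-\tau}/|\alpha\cdot k|$, the same counting of admissible $n$ per $k$ (with the $|\alpha\cdot k|$ factors cancelling in the main term and the boundary term absorbing the $|k|^{\upsilon}$ loss via hypothesis \eqref{conditon for alpha}), and the same conclusion by letting $\nu\downarrow 0$. The only cosmetic difference is that you work on $[-R,R]$ while the paper works on dyadic-type intervals $[A,1.01A]$; this changes nothing.
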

\begin{proof}
Let $A>0$. Consider the sets
\[
\mathscr{B}_{k,n}=\{\omega|~|\omega\alpha\cdot k-n|\leq
\nu|k|^{-\tau}\}=\{\omega|~|\omega-\frac{n}{\alpha\cdot k}|\leq
\frac{\nu|k|^{-\tau}}{|\alpha\cdot k|}\}
\] and
\[
\mathscr{B}_{k,n,A}=\mathscr{B}_{k,n}\cap[A,1.01A].
\]
Clearly,
\[
[A,1.01A]\setminus\mathscr{D}(\nu,\tau;\alpha)=\cup_{k,n}\mathscr{B}_{k,n,A}
\]
So Lemma \ref{density} will be proved when we show
\[
|\cup_{k,n}\mathscr{B}_{k,n,A}|\leq\nu\cdot
C(A,\tau,\alpha,\upsilon,\mu).
\]
We clearly have that $\mathscr{B}_{k,n}$ is an interval of length
\[
|\mathscr{B}_{k,n}|=\frac{2\nu|k|^{-\tau}}{|k\cdot\alpha|}.
\]
We also observe that $\mathscr{B}_{k,n,A}=\varnothing$ unless
$A+\frac{\nu|k|^{-\tau}}{|k\cdot\alpha|}\leq
\frac{n}{k\cdot\alpha}\leq
1.01A+\frac{\nu|k|^{-\tau}}{|k\cdot\alpha|}$ or
$A-\frac{\nu|k|^{-\tau}}{|k\cdot\alpha|}\leq
\frac{n}{k\cdot\alpha}\leq
1.01A-\frac{\nu|k|^{-\tau}}{|k\cdot\alpha|}$. Also clearly,
\[
\sharp\{n|~\mathscr{B}_{k,n,A}\neq \varnothing\}\leq 0.02A\cdot
|k\cdot\alpha|+2
%\frac{C}{|k\cdot \alpha|}\leq C\cdot n.
\]
Hence
\begin{align}
|\cup_{k,n}\mathscr{B}_{k,n,A}|\leq& \sum_{\substack{k,n\\
\mathscr{B}_{k,n,A}\neq\varnothing}}\frac{2\nu|k|^{-\tau}}{|k\cdot\alpha|}
\leq  \sum_k \frac{2\nu |k|^{-\tau}}{|k\cdot \alpha|}\cdot
(0.02A\cdot |k\cdot \alpha|+2)\nonumber\\
\leq & 0.02A\cdot 2\nu\cdot\sum_k |k|^{-\tau}+\frac{4\nu}{\mu}\sum_k
|k|^{-(\tau-\upsilon)} \leq \nu\cdot C.\nonumber
\end{align}

\end{proof}

\subsection{Cohomology equations}\label{cohomology}
To prove our results, as it is standard in KAM theory,
we have to study equations of the form
\begin{equation}\label{cohomology equation}
\tilde{\phi}(\theta+\omega)-\tilde{\phi}(\theta)=\tilde{\eta}(\theta)
\end{equation}
where $\tilde{\eta}\in QP(\alpha),~\omega\in\mathbb{R}$ are given
and $\tilde{\phi}$ is the unknown.

We note that if we use the function
$\hat{\phi}(\alpha\theta)=\tilde{\phi}(\theta)$ we see that
\eqref{cohomology equation} is equivalent to
\begin{equation}\label{cohomology equation extended}
\hat{\phi}(\sigma+\omega\alpha)-\hat{\phi}(\sigma)=\hat{\eta}(\sigma).
\end{equation}

The operator solving these equations is unbounded, but it satisfies
some ``tame" estimates from one space to another that can be
overcome by a quadratically convergent algorithm.

Clearly, a necessary solution for the existence of solutions of
\eqref{cohomology equation extended} is
\[
\int \hat{\eta}(\sigma)d\sigma=0.
\]

The equation \eqref{cohomology equation extended} has been
considered in KAM theory. The optimal results for analytic functions
were proved in \cite{Russmann'75} and we just reproduce the results
adapted to our notation. The results for Sobolev regularity are very
easy.

\begin{lemma}\label{lemma:cohomology}
Let $\hat{\eta}\in\mathscr{A}_\rho$ (resp. $H^r,~r\geq \tau$) be
such that
\[
\int_{\mathbb{T}^d}\hat{\eta}(\sigma)d\sigma=0.
\]
Then, there exists a unique solution $\hat{\phi}$ of
\eqref{cohomology equation extended} which satisfies
\begin{equation}\label{normalization equation}
\int_{\mathbb{T}^d}\hat{\phi}(\sigma)d\sigma=0.
\end{equation}
This solution satisfies for any $\rho^\prime<\rho$
\begin{equation}\label{cohomology bounds for analytic}
\|\hat{\phi}\|_{\rho^\prime}\leq C(d,\tau)\cdot
\nu^{-1}\cdot(\rho-\rho^\prime)^{-\tau}\|\hat{\eta}\|_\rho
\end{equation}
(resp.
\begin{equation}\label{cohomology bounds for Sobolev}
\|\hat{\phi}\|_{H^{s-\tau}}\leq C\cdot
\nu^{-1}\cdot\|\hat{\eta}\|_{H^s},\qquad \tau\leq s\leq r).
\end{equation}
Furthermore, any distribution solution of \eqref{cohomology
equation} differs from the solution claimed before by a constant.
\end{lemma}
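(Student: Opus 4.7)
The natural strategy is to solve the equation term by term in the Fourier expansion on $\mathbb{T}^d$ and then verify that the resulting series converges in the required norm.

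Write $\hat{\eta}(\sigma) = \sum_{k\in\mathbb{Z}^d}\hat{\eta}_k e^{2\pi i k\cdot\sigma}$ and look for $\hat{\phi}(\sigma) = \sum_{k}\hat{\phi}_k e^{2\pi i k\cdot\sigma}$. Substituting into \eqref{cohomology equation extended} forces, for each $k$, the relation $\hat{\phi}_k (e^{2\pi i k\cdot\omega\alpha}-1)=\hat{\eta}_k$. For $k=0$ this requires $\hat{\eta}_0=0$, which is the zero-average hypothesis, and $\hat{\phi}_0$ is then fixed to be $0$ by the normalization \eqref{normalization equation}. For $k\neq 0$ one defines $\hat{\phi}_k=\hat{\eta}_k/(e^{2\pi i k\cdot\omega\alpha}-1)$, and the Diophantine condition \eqref{Diophantine} gives $|e^{2\pi i k\cdot\omega\alpha}-1|\geq 4\,\|\omega\alpha\cdot k\|_{\mathbb{R}/\mathbb{Z}} \geq C\,\nu|k|^{-\tau}$, so that $|\hat{\phi}_k|\leq C\,\nu^{-1}|k|^\tau|\hat{\eta}_k|$.

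The Sobolev bound is then immediate from Parseval:
\[
\|\hat{\phi}\|_{H^{s-\tau}}^2 = \sum_{k\neq 0}|\hat{\phi}_k|^2(1+|k|^2)^{s-\tau} \leq C\,\nu^{-2}\sum_{k}|\hat{\eta}_k|^2 |k|^{2\tau}(1+|k|^2)^{s-\tau}\leq C\,\nu^{-2}\,\|\hat{\eta}\|_{H^s}^2,
\]
using $|k|^{2\tau}\leq(1+|k|^2)^\tau$.

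The analytic estimate is the subtle step. The naive bound
\[
\|\hat{\phi}\|_{\rho'}\leq \sum_{k\neq 0}|\hat{\phi}_k|\,e^{2\pi|k|\rho'}\leq C\,\nu^{-1}\|\hat{\eta}\|_\rho\sum_{k\neq 0}|k|^\tau e^{-2\pi|k|(\rho-\rho')}
\]
combined with $\sup_{x>0} x^\tau e^{-2\pi x(\rho-\rho')/2}\leq C(\rho-\rho')^{-\tau}$ and $\sum_k e^{-\pi|k|(\rho-\rho')}\leq C(d)(\rho-\rho')^{-d}$ yields an exponent $\tau+d$ rather than the sharper $\tau$ claimed in \eqref{cohomology bounds for analytic}. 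To recover the optimal exponent one follows R\"ussmann: estimate the $L^2$ norm of $\hat{\phi}$ on the torus $\{|\mathrm{Im}(\sigma_i)|=\rho-\delta\}$ using the Sobolev-type computation just given (with exponential weights absorbed into a shifted Fourier expansion), and then exploit that for analytic functions the $L^\infty$ norm on $D_{\rho'}$ is controlled by a slightly larger $L^2$ norm via the mean value property, optimizing $\delta$ as a function of $\rho-\rho'$. This is the main technical obstacle; I would cite \cite{Russmann'75} for the sharp constant and carry out the optimization explicitly.

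Finally, for uniqueness up to an additive constant, let $\hat{\psi}$ be the difference of two distributional solutions. Then $\hat{\psi}(\sigma+\omega\alpha)=\hat{\psi}(\sigma)$ as distributions. Testing against $e^{-2\pi i k\cdot\sigma}$ yields $\hat{\psi}_k(e^{2\pi i k\cdot\omega\alpha}-1)=0$ in the distributional sense; the Diophantine condition ensures the factor is nonzero for $k\neq 0$, forcing $\hat{\psi}_k=0$ for all $k\neq 0$, so $\hat{\psi}$ is a constant. Imposing the normalization \eqref{normalization equation} singles out the unique solution in the stated function space.
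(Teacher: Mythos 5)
Your proposal is correct and follows essentially the same route as the paper: solve term by term in Fourier series, use the Diophantine lower bound on $|e^{2\pi i k\cdot\omega\alpha}-1|$, get the Sobolev estimate directly from Parseval, and defer to R\"ussmann \cite{Russmann'75} for the optimal loss $(\rho-\rho')^{-\tau}$ in the analytic case (the paper likewise only cites \cite{Russmann'75} and \cite{Llave'01} for that step rather than reproducing the counting argument). Your additional remarks on the $\tau+d$ loss of the naive $\ell^1$ bound and on distributional uniqueness are consistent with, and slightly more explicit than, what the paper records.
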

The method of proof is standard. We take the Fourier coefficients
and see that \eqref{cohomology equation} is equivalent to:
\[
\hat{\phi}_k(e^{2\pi i k\cdot \alpha\omega}-1)=\hat{\eta}_k.
\]
When $k \ne 0$, we can use that $|e^{2\pi i k\cdot
\alpha\omega}-1|^{-1} \le C\cdot {\rm dist}(k \cdot \alpha \omega,
\mathbb{Z})^{-1}$. {From} this, the result for Sobolev space follows
rather straightforwardly. The result for analytic functions --
 with the optimal exponent quoted above -- requires somewhat more
elaborate arguments that use not only the upper bounds in the
denominators, but also that they are not saturated very often. We
refer to \cite{Russmann'75} for the original proof and to
\cite{Llave'01} for a more pedagogical exposition.

\section{Statement of the main results}
\subsection{Statement of the analytic result}
\begin{theorem}\label{main}
Let $h=Id+\tilde{h}$ where $\tilde{h}(\theta)=\hat{h}(\alpha\cdot
\theta)=\sum_{k\in\mathbb{Z}^d}\hat{h}_k \cdot e^{2\pi i\cdot k\cdot
\alpha\theta}$ with $\hat{h}_0=0,~\hat{h}\in\A^1_\rho$
and $\alpha\in\mathbb{R}^d$ such that
$\alpha\cdot j\neq0,~j\in\mathbb{Z}^d-\{0\}$. Denote
$\hat{l}=1+\partial_\alpha \hat{h}$ and
$T_{-\omega\alpha}(\sigma)=\sigma-\omega\alpha$. We assume
\begin{itemize}
\item [(H1)] Diophantine properties \eqref{Diophantine}: $|\omega\alpha\cdot k-n|\geq \nu|k|^{-\tau}, \quad \forall k\in
\mathbb{Z}^d-\{0\},~n\in\mathbb{Z}$.

\item [(H2)] Non-degeneracy condition: $\|\hat{l}(\sigma)\|_\rho\leq N^+, ~\|(\hat{l}(\sigma))^{-1}\|_\rho\leq
N^-$ and $|\langle\frac{1}{\hat{l}\cdot\hat{l}\circ
T_{-\omega\alpha}}\rangle|\geq c$ for some positive constant $c$ where
$\langle f\rangle$ denotes the average of the  periodic function $f$.

\item [(H3)] $\|\E[\hat{h},\lambda]\|_\rho\leq \epsilon$ where $\E$ is defined in \eqref{hullforce}.

\item [(H4)] Composition condition: Take
$\iota=dist(\mathbb{C}^d-\Omega,(Id+\alpha\cdot\hat{h})(\overline{D_\rho}))$
where $\Omega$ is the domain of $\widehat{U}$.
We assume that
$\|\hat{h}\|_\rho+\rho\leq \frac{1}{2}\iota$.
\end{itemize}
Assume furthermore that $\epsilon\leq
\epsilon^*(N^-,N^+,d,\tau,c,\iota,\|\widehat{U}\|_{C^2(\Omega)})\cdot\nu^4\cdot
\rho^{4\tau+A}$ where $\epsilon^*>0$ is a function and $A\in
\mathbb{R}^+$ ( we will make explicit $\epsilon^*$ and $A$ along the proof).

Then,
there exists a periodic function $\hat{h}^*$ and
$\lambda^*\in\mathbb{R}$ such that
\[\E[\hat{h}^*,\lambda^*]=0.\]
Moreover,
\begin{align}
&\|\hat{h}-\hat{h}^*\|_{\frac{\rho}{2}} \leq C\cdot \nu^{-2}
\rho^{-2\tau-A}\cdot\|\E[\hat{h},\lambda]\|_\rho,\nonumber\\
&|\lambda-\lambda^*| \leq
C\cdot\|\E[\hat{h},\lambda]\|_\rho.\nonumber
\end{align}
The solution $[\hat{h}^*,\lambda^*]$ is the only solution of
$\E[\hat{h}^*,\lambda^*]=0$ with zero average for $\hat{h}^*$ in a
ball centered at $\hat{h}$ in $\mathscr{A}_{\frac{3\rho}{8}}$, i.e.
$[\hat{h}^*,\lambda^*]$ is the unique solution in the set
\[
\left\{\hat{g}\in\mathscr{A}_{\frac{3\rho}{8}}~|~\langle\hat{g}\rangle=0,~\|\hat{g}-\hat{h}\|_{\frac{3\rho}{8}}\leq
\frac{\nu^2\cdot\rho^{2\tau}}{2\tilde{C}(N^-,N^+,d,\tau,c,C)}\right\}
\]where $\tilde{C}$ will be made explicit along the proof.

%$\|\hat{h}^*-\hat{h}\|_{\frac{3\rho}{2}}\leq
% \frac{\nu^2\cdot\rho^{2\tau}}{2\tilde{C}(N^-,N^+,d,\tau,c,C)}$ which
%we will make explicit along the proof.
\end{theorem}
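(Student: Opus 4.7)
The plan is a quadratically convergent Newton--Nash--Moser iteration built around the Lagrangian-style reducibility of \cite{Rana,LM'01,Llave'01,Rafael'08}. Given the approximate solution $[\hat h,\lambda]$ with residual $e=\E[\hat h,\lambda]$ of size $\epsilon$, I will construct a correction $[\Delta,\delta\lambda]$ whose updated residual is $O(\epsilon^{2})$ modulo a controlled loss of analyticity, and then iterate. The linearization of $\E$ in $\hat h$ at $[\hat h,\lambda]$ is
\[
L\Delta(\sigma)=\Delta(\sigma+\omega\alpha)+\Delta(\sigma-\omega\alpha)-2\Delta(\sigma)+\partial_\alpha \U(\sigma+\alpha\hat h(\sigma))\,\Delta(\sigma),
\]
and the key observation is \emph{automatic reducibility}: applying $\partial_\alpha$ to $\E[\hat h,\lambda]=e$ yields the identity $L\hat l=\partial_\alpha e$, whose right-hand side is $O(\epsilon)$ by Cauchy bounds. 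I then make the ansatz $\Delta=\hat l\,W$ and record the Leibniz-type identity
\[
L(\hat l W)=W\,L\hat l + \hat l(\sigma+\omega\alpha)(W(\sigma+\omega\alpha)-W(\sigma))+\hat l(\sigma-\omega\alpha)(W(\sigma-\omega\alpha)-W(\sigma)).
\]
The first summand is quadratically small and will be discarded.

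With this reduction, the Newton equation $L\Delta+\delta\lambda=-e$ becomes, setting $\xi(\sigma)=W(\sigma+\omega\alpha)-W(\sigma)$ so that $W(\sigma-\omega\alpha)-W(\sigma)=-\xi(\sigma-\omega\alpha)$,
\[
\hat l(\sigma+\omega\alpha)\xi(\sigma)-\hat l(\sigma-\omega\alpha)\xi(\sigma-\omega\alpha)=-e(\sigma)-\delta\lambda.
\]
I recast this as a genuine cohomological equation via the substitution $\tilde\eta(\sigma)=\hat l(\sigma)\,\hat l(\sigma-\omega\alpha)\,\xi(\sigma-\omega\alpha)$, which after multiplying the above by $\hat l(\sigma)$ gives
\[
\tilde\eta(\sigma+\omega\alpha)-\tilde\eta(\sigma)=\hat l(\sigma)\bigl(-e(\sigma)-\delta\lambda\bigr).
\]
Lemma~\ref{lemma:cohomology} applied with the Diophantine hypothesis (H1) solves for $\tilde\eta$ up to an additive constant, provided the right-hand side has zero average; since $\langle\hat l\rangle=1$ (from the normalization of $\hat h$), this compatibility condition uniquely determines $\delta\lambda=-\langle\hat l\, e\rangle$. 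Recovering $\xi$ algebraically from $\tilde\eta$ and then $W$ from $W(\sigma+\omega\alpha)-W(\sigma)=\xi(\sigma)$ requires a \emph{second} cohomology step, whose solvability condition $\langle\xi\rangle=0$ translates, after a change of variable, into $\langle\tilde\eta/(\hat l\cdot\hat l\circ T_{-\omega\alpha})\rangle=0$. The free additive constant in $\tilde\eta$ is therefore pinned precisely by hypothesis (H2), which forces $|\langle 1/(\hat l\cdot\hat l\circ T_{-\omega\alpha})\rangle|\ge c>0$; the remaining constant in $W$ is fixed by requiring $\langle\hat l W\rangle=0$, preserving the normalization~\eqref{normalization}.

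Two applications of Lemma~\ref{lemma:cohomology} combined with (H4) and the composition estimate of Lemma~\ref{composition lemma} yield $\|\Delta\|_{\rho-\delta}+|\delta\lambda|\le C\nu^{-2}\delta^{-2\tau}\|e\|_\rho$, and the new residual satisfies $\|\E[\hat h+\Delta,\lambda+\delta\lambda]\|_{\rho-2\delta}\le C\nu^{-4}\delta^{-A}\|e\|_\rho^{2}$, the two contributions coming from the discarded summand $W\,\partial_\alpha e$ and the quadratic Taylor remainder of the composition operator in Lemma~\ref{composition lemma}. Iterating with geometric losses $\rho_{n+1}=\rho_n-\rho/2^{n+2}$ defeats the small-divisor factors thanks to the quadratic convergence and produces a Cauchy sequence in $\A_{\rho/2}$ converging to the desired $[\hat h^{*},\lambda^{*}]$, with the quantitative bounds as stated. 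The main obstacle will be to propagate the non-degeneracy hypotheses (H2) and (H4) along the iteration: the bounds $N^{\pm}$, the twist average $c$, and the composition gap $\iota$ must each be treated as inductive parameters that deteriorate only mildly from step to step, and this is where the size of $\epsilon^{*}$ is truly pinned down. Uniqueness in the prescribed ball then follows from the same machinery applied to the difference of two hypothetical solutions: this difference satisfies an equation of the form $L\Delta+\delta\lambda=\text{quadratic}$, and the invertibility of $L$ produces an estimate $\|\Delta\|\le C\|\Delta\|^{2}$, forcing $\Delta=0$.
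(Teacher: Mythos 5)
Your construction is essentially the paper's own: the automatic reducibility identity $L\hat l=e'$ obtained by differentiating the approximate equation, the discarding of the quadratically small term $W\,L\hat l$, the factorization of the resulting quasi-Newton operator into two successive cohomology equations (your $\tilde\eta$ is the paper's $\widehat W$ up to sign, and your $W$ is the paper's $\widehat\Delta/\hat l$), the determination of $\delta\lambda=-\langle\hat l\,e\rangle$ from $\langle\hat l\rangle=1$, the pinning of the additive constant in $\tilde\eta$ by hypothesis (H2), the normalization $\langle\hat l W\rangle=0$, and the two sources of quadratic error. The iteration scheme with geometric domain losses and the propagation of $N^{\pm}$, $c$, $\iota$ as inductive parameters also match the paper's convergence argument.

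The one step that does not close as written is uniqueness. Because the approximate inverse of the linearized operator loses a fixed amount of analyticity width, what the argument actually produces is
$\|\hat h^{**}-\hat h^{*}\|_{\rho''}\le C\,\nu^{-2}(\cdots)\,\|\hat h^{**}-\hat h^{*}\|_{\rho'}^{2}$
with $\rho''<\rho'$: the norm on the left is strictly weaker than the one being squared on the right, so the inequality does not by itself force the difference to vanish. The paper closes this gap with the Hadamard three-circle interpolation of Lemma~\ref{interpolation inequality for analytic}, writing $\|\cdot\|_{\rho/4}^{2}\le\|\cdot\|_{\rho/8}\cdot\|\cdot\|_{3\rho/8}$ so that the weak norm can be divided out and the conclusion reduces to the smallness of $\|\hat h^{**}-\hat h^{*}\|_{3\rho/8}$ --- which is exactly why the uniqueness ball in the statement is taken in $\A_{\frac{3\rho}{8}}$ (and, in the Sobolev version, why one interpolates between $H^{m-4\tau}$ and $H^{m+4\tau}$). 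You need this interpolation, or an equivalent device, to finish the uniqueness claim.
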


\subsection{Statement of the Sobolev result}
\begin{theorem}\label{main2}
Let $m>\frac{1}{2}+2\tau$ and $\widehat{U}\in C^{m+34\tau+1}$. Let
$h=Id+\tilde{h}$ where $\tilde{h}(\theta)=\hat{h}(\alpha\cdot
\theta)=\sum_{k\in\mathbb{Z}^d}\hat{h}_k \cdot e^{2\pi i\cdot k\cdot
\alpha\theta}$ with $\hat{h}\in H^{m+32\tau},~\hat{h}_0=0$ for any
$\theta\in \mathbb{R}$ and $\alpha\in\mathbb{R}^d$ such that
$\alpha\cdot j\neq0,~j\in\mathbb{Z}^d-\{0\}$. Denote
$\hat{l}=1+\partial_\alpha \hat{h}$ and
$T_{-\omega\alpha}(\sigma)=\sigma-\omega\alpha$. We assume
\begin{itemize}
\item [(H1)] Diophantine properties \eqref{Diophantine}: $|\omega\alpha\cdot k-n|\geq \nu|k|^{-\tau}, \quad \forall k\in
\mathbb{Z}^d-\{0\},~n\in\mathbb{Z}$.

\item [(H2)] Non-degeneracy condition: $\|\hat{l}(\sigma)\|_{H^m}\leq N^+, ~\|(\hat{l}(\sigma))^{-1}\|_{H^m}\leq
N^-$ and $|\langle\frac{1}{\hat{l}\cdot\hat{l}\circ
T_{-\omega\alpha}}\rangle|\geq c$ for some positive constant c.

\item [(H3)] $\|\E[\hat{h},\lambda]\|_{H^{m}}\leq \epsilon$.

\end{itemize}
Assume furthermore that $\epsilon\leq
\epsilon^*(N^-,N^+,d,\tau,c,\|\widehat{U}\|_{C^{m+34\tau+1}})\cdot\nu^{-2}$
where $\epsilon^*>0$ is a function which we will make explicit along
the proof. Then, there exists a periodic function $\hat{h}^*\in
H^{m-4\tau}$ and $\lambda^*\in\mathbb{R}$ such that
\[\E[\hat{h}^*,\lambda^*]=0.\]
Moreover,
\begin{align}
&\|\hat{h}-\hat{h}^*\|_{H^{m-4\tau}} \leq C\cdot \nu^{-2}
(N^+)^2\cdot\epsilon,\nonumber\\
&|\lambda-\lambda^*| \leq C\cdot \epsilon.\nonumber
\end{align}
The solution $[\hat{h}^*,\lambda^*]$ is the only solution of
$\E[\hat{h}^*,\lambda]=0$ with zero average for $\hat{h}^*$ in a
neighborhood of $[\hat{h},\lambda]$ in $H^{m+4\tau}$, i.e.
$[\hat{h}^*,\lambda^*]$ is the unique solution in the set
\[
\left\{\hat{g}\in
H^{m+4\tau}~|~\langle\hat{g}\rangle=0,~\|\hat{g}-\hat{h}\|_{H^{m+4\tau}}\leq
\frac{\nu^2}{2\tilde{C}(N^-,N^+,d,\tau,c,\|\widehat{U}\|_{C^{m+34\tau+1}})\cdot
C_{m-4\tau,m+4\tau}}\right\}
\]where $\tilde{C}$ will be made explicit along the proof.

%$\|\hat{h}^*-\hat{h}\|_{H^{m+4\tau}}\leq
%\frac{\nu^2}{2\tilde{C}(N^-,N^+,d,\tau,c,\|\widehat{U}\|_{C^{m+34\tau+1}})\cdot
%C_{m-4\tau,m+4\tau}}$ which we will make explicit along the proof.
\end{theorem}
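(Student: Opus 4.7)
The plan is to adapt the Nash--Moser quasi-Newton iteration used for Theorem~\ref{main} to the Sobolev category, following the framework of \cite{CallejaL}, the Sobolev cohomology estimate in Lemma~\ref{lemma:cohomology}, and the tame composition estimate \eqref{Taylor-sobolev}. The algebraic heart of the iteration and the role of the extra parameter $\lambda$ are identical to the analytic case; what changes is the bookkeeping of derivatives, since solving each cohomology equation now costs a fixed loss of $\tau$ derivatives instead of shrinking an analyticity radius.

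My first step is to exploit the translation symmetry $\E[\hat{h}(\cdot+\beta\alpha)+\beta,\lambda](\sigma) = \E[\hat{h},\lambda](\sigma+\beta\alpha)$, whose $\beta$-derivative at $\beta=0$ yields the key identity
\[
D_1\E[\hat{h},\lambda]\,\hat{l} = \partial_\alpha\E[\hat{h},\lambda].
\]
Thus $\hat{l}$ is an approximate kernel element for $D_1\E$, with defect of the order of the residual $E=\E[\hat h,\lambda]$. I would then substitute $\Delta = \hat{l}\,W$ into the Newton equation $D_1\E[\hat h,\lambda]\Delta + \delta\lambda = -E$ and use the identity to cancel the $\partial_\alpha\widehat{U}\cdot\hat{l}$ term; up to a quadratic remainder $W\partial_\alpha E$ this rewrites as
\[
\hat{l}_+(W_+ - W) - \hat{l}_-(W - W_-) = -E - \delta\lambda,
\]
with $\hat{l}_\pm = \hat{l}\circ T_{\pm\omega\alpha}$ and $W_\pm = W\circ T_{\pm\omega\alpha}$ (where $T_{+\omega\alpha}(\sigma) = \sigma+\omega\alpha$). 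Setting $B(\sigma) = \hat{l}(\sigma)\hat{l}(\sigma+\omega\alpha)\bigl(W(\sigma+\omega\alpha)-W(\sigma)\bigr)$ collapses this to $(B - B\circ T_{-\omega\alpha})/\hat{l} = -E - \delta\lambda$, decoupling the Newton step into two successive cohomology equations of the type handled by Lemma~\ref{lemma:cohomology}.

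Next I would carry out the Newton step concretely. First, pick $\delta\lambda$ so that $\hat l(E+\delta\lambda)$ has zero mean; since $\langle \hat l\rangle = 1$ (because $\partial_\alpha\hat h$ is a periodic gradient), this gives $\delta\lambda = -\langle \hat{l}\,E\rangle$ with $|\delta\lambda|\le C\|E\|_{H^{m}}$. Apply Lemma~\ref{lemma:cohomology} to solve $B - B\circ T_{-\omega\alpha} = -\hat{l}(E+\delta\lambda)$ for the zero-average solution $B_0\in H^{m-\tau}$. Adjust the additive constant $c_0$ in $B = B_0 + c_0$ so that $B/(\hat l\,\hat l_+)$ has zero mean; this is possible with controlled $c_0$ because, by translation invariance of the integral, $\langle 1/(\hat l\,\hat l_+)\rangle = \langle 1/(\hat l\cdot\hat l\circ T_{-\omega\alpha})\rangle$, which is bounded below in absolute value by $c$ in (H2). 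A second application of Lemma~\ref{lemma:cohomology} to $W_+ - W = B/(\hat l\,\hat l_+)$ yields $W\in H^{m-2\tau}$; then update $\hat h \to \hat h + \hat l\,W$ and $\lambda\to \lambda + \delta\lambda$. The Banach algebra property on $H^m$ (valid since $m>\tfrac12+2\tau$ and $\tau>d$ by Lemma~\ref{density}, which together force $m>d/2$), combined with the Taylor composition estimate \eqref{Taylor-sobolev} applied to $\widehat U$, gives a quadratic bound of the form $\|\E[\hat h',\lambda']\|_{H^{m-2\tau}}\le C\nu^{-2}\|E\|_{H^m}^2$ for the new residual.

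The hard part will be promoting this one Newton step to a convergent iteration: each step costs $2\tau$ derivatives and a factor $\nu^{-2}$, so naively iterating exhausts the Sobolev regularity. My plan is the standard Nash--Moser scheme built around the pair of norms $(H^{m+32\tau},H^{m-4\tau})$. The high norm of $\hat h_n$ will be controlled a-priori along the iteration using the interpolation inequality in Lemma~\ref{interpolation inequality for Sobolev} between a (nearly frozen) high norm and the rapidly shrinking low norm; the quadratic decay in the low norm absorbs the $2\tau$ loss per stage, and the specific numerology $32\tau,\,34\tau+1,\,-4\tau$ in the hypotheses is exactly what is needed for the geometric-sum estimates to close and for the composition estimate \eqref{Taylor-sobolev} to remain usable for $\widehat{U}\in C^{m+34\tau+1}$ throughout. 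Passing to the limit produces $(\hat h^*,\lambda^*)$ with the stated quantitative bounds, and local uniqueness in the prescribed $H^{m+4\tau}$-ball follows by applying the same Newton-type argument to the difference of two candidate solutions and showing it must vanish.
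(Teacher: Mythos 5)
Your proposal follows essentially the same route as the paper: the same conjugation of the linearized operator via the identity $D_1\E[\hat h,\lambda]\,\hat l=\partial_\alpha\E[\hat h,\lambda]$ (which you derive from translation symmetry rather than by direct differentiation, a cosmetic difference), the same factorization of the quasi-Newton equation into two cohomology equations with $\delta=-\langle\hat l\,E\rangle$ and the average of the intermediate unknown fixed by (H2), the same tame Sobolev estimates losing $2\tau$ derivatives and producing the quadratic remainder $e'\widehat\Delta/\hat l+R$, and the same Nash--Moser convergence with smoothing (the paper invokes the abstract theorem of \cite{CallejaL} for exactly the numerology $m-4\tau,\dots,m+34\tau$) and uniqueness via the Sobolev interpolation inequality. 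No substantive differences or gaps.
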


\begin{remark} The theorems \ref{main} and \ref{main2} have
the  \emph{``a-posteriori''} format of numerical analysis.

Given a function which solves very approximately the invariance
equations, then there is a true solution nearby.
The needed approximation is quantified in terms of
the non-degeneracy condition  $N^+, N^-, c$,
(which in numerical
analysis are often called \emph{condition numbers} ).
We note that the condition numbers  can be computed in the
approximate solution. We emphasize that this formulation does
not require that the system is close to integrable.

Of course, the non-degeneracy conditions depend on the function $\h$
(and on the parameter of the domain $\rho$ in the analytic case. If
this can cause confusion -- e.g. when we are performing an iterative
step -- we will use $N^\pm(\h; \rho)$.
\end{remark}

\section{Proof of main theorems, Theorem \ref{main} and Theorem \ref{main2}}
As indicated in the introduction, the proofs of Theorems \ref{main}
and \ref{main2} are  based on an iterative step that given an
approximate solution of \eqref{invariance approximate} will produce
a better approximation.

The crux of the proof is to show that, if started with a
sufficiently approximate solution, the procedure converges.

As it is very well known in KAM theory, there are arguments that
establish the convergence, provided that we show that the iterative
procedure satisfies \emph{tame quadratic estimates}. That is, that
the norm of the  new error is bounded by the square of another norm
of the original error (in a smoother space) times a factor that
depends on the ``loss of regularity''.  There are several abstract
theorems in this direction, one which is quite well adapted to the
problem at hand and which we will use appears in \cite{CallejaL}.

In Section \ref{motivation}, we will give some motivations for the
iterative procedure. (It is a Newton method with a small
modification that does not affect the quadratic convergence.)

In Section \ref{algorithm}, we will formulate the iterative
procedure as a succession of elementary sub-steps. We note that
these elementary sub-steps can be implemented by very efficient
algorithms. If the functions $\hat{h}$ are discretized using $N$
points and appropriate algorithms are used for the mathematical
sub-steps, the iterative procedure requires only $O(N)$ storage and
$O(N\log{N})$ operations.

In Section \ref{estimates}, we present estimates for the iterative
step. We first present estimates on how much it changes the
function. Then, we use the Taylor estimates to show that the error
of the improved function  is \emph{tame quadratic}  in the sense of
Nash-Moser theory.

In Section~\ref{convergence}, we will review the convergence of the
procedure, which, as we have indicated before is rather standard,
indeed in \cite{CallejaL}, there is an abstract theorem designed to
cover exactly the problems considered here. Nevertheless, for the
sake of completeness in the analytic case, we will also present a
very short direct proof of the convergence argument. Since this
direct proof is so direct it also leads to  good numerical values.

In Section~\ref{uniqueness}, we present several considerations that
allow us to discuss uniqueness.

Finally, in Section~\ref{further}, we show that we can obtain
several consequences combining the Sobolev and analytic versions of
the a-posteriori theorem. Namely, we show that for analytic mappings
all sufficiently smooth solutions are analytic, that Lindstedt
series converge, and we present a numerically efficient criterion
for the breakdown of analyticity. As it was shown in \cite{ALD}, the
breakdown of analyticity is associated to the onset of transport
properties, so there is some interest in its computation.

Of course, readers interested only in rigorous proofs can safely
skip Section \ref{motivation} and the algorithmic considerations in
Section \ref{algorithm}. Readers only interested in algorithms can
skip \ref{estimates}.

\subsection{Motivation for the iterative step}\label{motivation}
We start from an approximate solution of \eqref{hullforce},
\begin{equation}\label{invariance approximate}
\hat{h}(\sigma+\omega\alpha)+\hat{h}(\sigma-\omega\alpha)-2\hat{h}(\sigma)+\widehat{U}(\sigma+\alpha\cdot
\hat{h}(\sigma))+\lambda=e
\end{equation}
where $e$ is to be thought of as \emph{``small''}.

Our goal is to devise a procedure that gives a much more approximate
solution. For the moment, we will not make precise the sense in which
quantities are small. This will be taken up in Section~\ref{estimates}.

%\begin{equation}\label{invariance approximate for tilde}
%\tilde{h}(\theta+\omega)+\tilde{h}(\theta-\omega)-2\tilde{h}(\theta)+\widehat{U}(\alpha\cdot
%(\theta+\tilde{h}(\theta)))+\lambda=e
%\end{equation}

Given an approximate solution as in \eqref{invariance approximate},
the Newton method would consist in finding a solution of
\begin{equation}\label{Newton}
\widehat{\Delta}(\sigma+\omega\alpha)+\widehat{\Delta}(\sigma-\omega\alpha)-2\widehat{\Delta}(\sigma)+
\partial_\alpha\widehat{U}(\sigma+\alpha\cdot
\hat{h}(\sigma))\cdot\widehat{\Delta}(\sigma)+\delta=-e.
\end{equation}
Then $[\hat{h}+\widehat{\Delta},~\lambda+\delta]$ will be a better
approximate solution.

The equation \eqref{Newton} is hard to study because of the term
$\partial_\alpha\widehat{U}(\sigma+\alpha\cdot
\hat{h}(\sigma))\cdot\widehat{\Delta}(\sigma)$ which is not constant
coefficients.

The key observation is that, if we are given
\eqref{invariance approximate} we are also given the
following equation which is just
obtained by taking the derivative of \eqref{invariance approximate}
with respect to $\theta$ (we recall that  $\sigma=\theta\alpha$):
\begin{equation}\label{derivative invariance approximate}
\partial_\alpha\hat{h}(\sigma+\omega\alpha)+\partial_\alpha\hat{h}(\sigma-\omega\alpha)-2\partial_\alpha\hat{h}(\sigma)+
\partial_{\alpha}\widehat{U}(\sigma+\alpha\cdot \hat{h}(\sigma))\cdot
(1+\partial_\alpha \hat{h}(\sigma))=e^\prime(\theta).
\end{equation}

Denoting $\hat{l}(\sigma)=1+\partial_\alpha\hat{h}(\sigma)$, we
rewrite \eqref{derivative invariance approximate}:
\begin{equation}\label{derivative invariance approximate for l}
\hat{l}(\sigma+\omega\alpha)+\hat{l}(\sigma-\omega\alpha)-2\hat{l}(\sigma)+
\partial_{\alpha}\widehat{U}(\sigma+\alpha\cdot \hat{h}(\sigma))\cdot
\hat{l}(\sigma)=e^\prime(\theta).
\end{equation}

If we substitute \eqref{derivative invariance approximate for l}
into \eqref{Newton}, we obtain that the Newton procedure
\eqref{Newton} is equivalent to
\begin{equation}\label{Newton2}
\widehat{\Delta}(\sigma+\omega\alpha)+\widehat{\Delta}(\sigma-\omega\alpha)+\frac{e^\prime(\theta)-
\hat{l}(\sigma+\omega\alpha)-\hat{l}(\sigma-\omega\alpha)}{\hat{l}(\sigma)}\cdot
\widehat{\Delta}(\sigma)=-e-\delta.
\end{equation}

The equation \eqref{Newton2} is still hard to solve, but, as we will
now see, we can study the equation obtained by omitting the term
$e^\prime(\theta)\cdot\widehat{\Delta}(\sigma)$ in \eqref{Newton2}.

Because both $e^\prime(\theta)$ and $\widehat{\Delta}(\sigma)$ are
small, we can hope and we will show that omitting the term
$e^\prime\cdot\widehat{\Delta}(\sigma)$ does not affect the
quadratic character of the procedure.

The key observation is that the quasi-Newton equation for
$\Delta,~\delta$:
\begin{equation}
\label{QuasiNewtonall}
\hat{l}(\sigma)\cdot\widehat{\Delta}(\sigma+\omega\alpha)+\hat{l}(\sigma)\cdot\widehat{\Delta}(\sigma-\omega\alpha)
-\widehat{\Delta}(\sigma)[\hat{l}(\sigma+\omega\alpha)+\hat{l}(\sigma-\omega\alpha)]=(-e-\delta)\cdot
\hat{l}(\sigma)
\end{equation}
 is
equivalent to the system
\begin{equation}\label{QuasiNewton1}
\left( \frac{\widehat{\Delta}}{\hat{l}} \right) \circ
T_{-\omega\alpha}-
\left( \frac{\widehat{\Delta}}{\hat{l}} \right)
= \frac{\widehat{W}}{\hat{l}\cdot\hat{l}\circ T_{-\omega\alpha}}
\end{equation}
\begin{equation}\label{QuasiNewton2}
\widehat{W}\circ T_{\omega\alpha}-\widehat{W}=\hat{l}\cdot
(e+\delta).
\end{equation}

In fact, from \eqref{QuasiNewton1}, we get
\[
\widehat{W}=\widehat{\Delta}\circ T_{-\omega\alpha}\cdot
\hat{l}-\widehat{\Delta}\cdot \hat{l}\circ T_{-\omega\alpha}
\] and
\[
\widehat{W}\circ T_{\omega\alpha}=\widehat{\Delta}\cdot \hat{l}\circ
T_{\omega\alpha}-\widehat{\Delta}\circ T_{\omega\alpha}\cdot \hat{l}
\]
By \eqref{QuasiNewton2}, we can easily get the equivalence.

\medskip

The key point is that the equations \eqref{QuasiNewton1} and
\eqref{QuasiNewton2} are of the form  \eqref{cohomology equation},
and can be studied
using the theory in Section~\ref{cohomology}.

We write $\widehat{W}=\widehat{W}^0+\overline{\widehat{W}}$ where
$\widehat{W}^0$ is a function with  zero average and
$\overline{\widehat{W}}$ is a number. That is, we decompose
$\widehat{W}$ into its average and the zero average part. Both are
unknowns. Now we describe the procedure to solve the system
\eqref{QuasiNewton1}, \eqref{QuasiNewton2}.

We first choose $\delta$ to be the unique value that makes  the
average of the right-hand-side of \eqref{QuasiNewton2} zero. Then,
we can apply Lemma~\ref{lemma:cohomology} to find $\widehat{W}^0$
solving \eqref{QuasiNewton2}. We note that there is only one choice
of $\delta$ and then, $\widehat{W}^0$ is determined uniquely, by the
condition that it solves \eqref{QuasiNewton2} and that it has zero
average. The only solutions of \eqref{QuasiNewton2} differ from it
by a constant.

Then  we observe that
$\overline{\widehat{W}}=-\frac{\langle\frac{\widehat{W}^0}{\hat{l}\cdot
\hat{l}\circ
T_{-\omega\alpha}}\rangle}{\langle\frac{1}{\hat{l}\cdot\hat{l}\circ
T_{-\omega\alpha}}\rangle}$ is the only possible value of the
average of solutions of \eqref{QuasiNewton2} that  makes the
right-hand-side of \eqref{QuasiNewton1} with zero average. Them, we
can apply again Lemma~\ref{lemma:cohomology} to find
$\frac{\widehat{\Delta}}{\hat{l}}$ solving \eqref{QuasiNewton1}.
This solution is unique up to the addition of constant. Once we have
$\frac{\widehat{\Delta}}{\hat{l}}$, we obtain $\widehat{\Delta}$ is
obtained just multiplying by $\hat{l}$. Note that the $\Delta$ is
thus determined uniquely up to the addition of a constant multiple
of $\hat{l}$. In particular, $\widehat{\Delta}$ is unique when we
impose the normalization that it has zero average.

\subsection{Formulation of the iterative step}\label{algorithm}

\begin{algorithm}
\label{alg:quasi-newton}
Given $\hat{h}:\mathbb{T}^d\rightarrow \mathbb{R},~\lambda\in
\mathbb{R}$ with $\hat{h}(\sigma)=\sum_{k\in \mathbb{Z}^d}\hat{h}_k
e^{2\pi i k\sigma}$ and $\tilde{h}(\theta)=\hat{h}(\alpha\theta)$
for $\theta\in\mathbb{R}$ and any irrational vector
$\alpha\in\mathbb{R}^d,~d\geq2$, we will calculate:
\begin{itemize}
\item [1)] Let $\mathscr{L}=\hat{h}(\sigma+\alpha\omega)+\hat{h}(\sigma-\alpha\omega)-2
\hat{h}(\sigma)$. In Fourier components
$\widehat{\mathscr{L}}_k=2(\cos{\omega \alpha\cdot k}-1)\hat{h}_k$.

\item [2)] We calculate $\widehat{U}(\sigma+\alpha\cdot\hat{h}(\sigma))$.

\item [3)] So we can calculate $e=\mathscr{L}+\widehat{U}+\lambda$.

\item [4)] Calculate $\hat{l}=1+\partial_\alpha\hat{h}$. In Fourier components $\hat{l}_k=\delta_{k,0}+2\pi i k\cdot
\alpha\cdot \hat{h}_k$  where $\delta_{0,k}$ is the Kronecker delta.

\item [5)] Let $f=\hat{l}\cdot e$.

\item [6)] Choose $\delta=-\langle f\rangle$.

\item [7)] Denote $b=\hat{l} \cdot (e+\delta)$.

\item [8)] Solve the cohomology equation \eqref{QuasiNewton2} for
$\widehat{W}^0$ with zero average. That is,
$\widehat{W}^0_k=\frac{b_k}{2(\cos{\omega\alpha\cdot k}-1)}$.

\item [9)] Take $\overline{\widehat{W}}=-\frac{\langle\frac{\widehat{W}^0}{\hat{l}\cdot \hat{l}\circ
T_{-\omega\alpha}}\rangle}{\langle\frac{1}{\hat{l}\cdot\hat{l}\circ
T_{-\omega\alpha}}\rangle}.$

\item [10)] Calculate $\widehat{W}=\widehat{W}^0+\overline{\widehat{W}}$.

\item [11)] Solve the cohomology equation \eqref{QuasiNewton1}.
Find $\tilde \beta$  satisfying
$\tilde \beta - \tilde \beta \circ T_{-\omega \alpha} =
\frac{\widehat{W}}{\hat{l}\cdot\hat{l}\circ T_{-\omega\alpha}}$.
That is,
$\tilde \beta_k=\frac{a_k}{2(\cos{\omega\alpha\cdot k}-1)}$ where
$a=\frac{\widehat{W}}{\hat{l}\cdot \hat{l}\circ T_{-\omega\alpha}}$.

\item[12)] We obtain  $\Delta = (\tilde \beta  + \bar \beta)\cdot \hat{l}$
where $\bar \beta$ is chosen to be
$ - \langle \tilde \beta \cdot \hat{l}\rangle/ \langle \hat l \rangle$
so that $\langle \Delta \rangle  = 0$.

\end{itemize}
\end{algorithm}

\begin{remark}
\label{rem:uniqueness} It  is important to note that the procedure
also shows that the solution of \eqref{QuasiNewton1} and
\eqref{QuasiNewton2} is unique up to the addition of a constant
multiple of $\hat{l}$ to $\hat{\Delta}$.

Hence, if we the choose the solutions of
\eqref{QuasiNewton1}, \eqref{QuasiNewton2}
which satisfy the normalization $\langle \widehat{\Delta} \rangle = 0$,
the solutions are unique.

Of course, since the system \eqref{QuasiNewton1}, \eqref{QuasiNewton2}
is equivalent to \eqref{QuasiNewtonall}, the same considerations apply
to \eqref{QuasiNewtonall}.
\end{remark}

\begin{remark}
\label{efficiency}

Note that if we consider functions discretized by their values at
$N$ points and by $N$ Fourier coefficients, steps 1), 4), 8), 11)
are fast in Fourier coefficients (they require $O(N)$ operations)
while the other steps are fast (they require $O(N)$ operations) in
the representation of the function by its values at points. Of
course, once we know the representation in space or in Fourier
coefficients we can use the Fast Fourier transform which requires
$O(N\log N)$ operations to compute the other.

Note also that if we discretize the function as above, the iterative
step only requires to store several functions, and therefore we only
need to store $O(N)$ numbers.

We, thus obtain a quadratical convergent algorithm, with $O(N)$
storage requirements and $O(N\log(N))$ operations. In contrast, a
Newton method would require $O(N^2)$ storage to store a matrix and
$O(N^3)$ operations to solve the linear equations (there are faster
algorithms \cite{Knuth} to solve linear equations but they do not
seem to be practical). In practice the present algorithm with $N =
10^7$ can run comfortably on a modest desktop machine.
\end{remark}

\begin{remark}
\label{rem:approximate-inverse} It is important to note that
$[\widehat{\Delta}, \delta]$, the outcome of the algorithm depends
linearly on $e \equiv \E[ \hat h, \lambda]$.

Hence, we will write
\begin{equation} \label{eq:approximate-inverse}
[\widehat \Delta, \delta] = \eta[\hat h, \lambda] e
\end{equation}

The operator $\eta$ is called an \emph{``approximate right inverse''}
in Nash-Moser theory. See, for example \cite{Zeh75}.

Notice that the estimates for the improved solution can be written
in a symbolic way as   estimating $\E[ [\hat{h}, \lambda] +
\eta[\hat{h}, \lambda] \E[\hat{h}, \lambda] \, ]$, which, using
Taylor expansion (up to quadratic errors) becomes
\begin{equation} \label{linearapprox}
\E[\hat h, \lambda] +
 D\E[ \hat{h}, \lambda]  \eta[\hat{h}, \lambda] \E[\hat{h}, \lambda]
\end{equation}
In the Newton method, we would choose $\eta$ in such a way that
\eqref{linearapprox} vanishes.
As pointed out in \cite{Moser66a, Zeh75}, it suffices
that the norm of  \eqref{linearapprox}  can be bounded
by the square of another norm of $\E[\hat h, \lambda]$.
\end{remark}

\subsection{Estimates on the quasi-Newton step}\label{estimates}
In this section we show that the Quasi-Newton method specified in
Algorithm~\ref{alg:quasi-newton} produces more approximate
solutions. We will present two versions of the estimates, one in
analytic spaces and another one in Sobolev spaces. The goal is  to
obtain that the new error is quadratic in the original error even if
in a weaker norm. We note that the analytic estimates presented are
a bit more delicate and involve a  condition,
\eqref{inductionassumption}.

\subsubsection{Some  useful identities}
We start by remarking an elementary identity that will be used for
both  the  analytic and the  Sobolev estimates:
\begin{equation}\label{geometric identity}
\hat{l}\cdot(D_1
\E[\hat{h},\lambda]\widehat{\Delta})-\widehat{\Delta}\cdot(D_1
\E[\hat{h},\lambda]\hat{l})=-\hat{l}\cdot(\E[\hat{h},\lambda]+\delta).
\end{equation}where $D_1$ denote the derivative with respect to the
first variable.

We also have the following identity obtained
just adding and subtracting terms in the definition of
$\E[ \hat h + \hat \Delta, \lambda + \delta] $ and
grouping them.
\begin{equation}
\label{eq:errorestimate}
\begin{split}
&\E[\hat{h}+\widehat{\Delta},\lambda+\delta] \\
=& \E[\hat{h},\lambda]+\widehat{\Delta}(\sigma+\omega\alpha)+
\widehat{\Delta}(\sigma-\omega\alpha)-2\widehat{\Delta}(\sigma)+
\delta\\
 &+\widehat{U}(\sigma+\alpha\cdot(\hat{h}+\widehat{\Delta})(\sigma))-\widehat{U}(\sigma+\alpha\cdot\hat{h}(\sigma))\\
=&\E[\hat{h},\lambda]+(-\E[\hat{h},\lambda])+
\frac{\hat{l}(\sigma+\omega\alpha)+
\hat{l}(\sigma-\omega\alpha)-2\hat{l}(\sigma)}{\hat{l}(\sigma)}\cdot
\widehat{\Delta}(\sigma)\\
 &+\widehat{U}(\sigma+\alpha\cdot(\hat{h}+\widehat{\Delta})(\sigma))-\widehat{U}(\sigma+\alpha\cdot\hat{h}(\sigma))\\
=&\frac{e^\prime(\theta)-\partial_\alpha\widehat{U}(\sigma+\alpha\cdot\hat{h}(\sigma))\cdot\hat{l}(\sigma)}{\hat{l}(\sigma)}\cdot\widehat{\Delta}(\sigma)
+\widehat{U}(\sigma+\alpha\cdot(\hat{h}+\widehat{\Delta})(\sigma))-\widehat{U}(\sigma+\alpha\cdot\hat{h}(\sigma))\\
=&e^\prime\cdot\frac{\widehat{\Delta}(\sigma)}{\hat{l}(\sigma)}+\widehat{U}(\sigma+\alpha\cdot(\hat{h}+\widehat{\Delta})(\sigma))-
 \widehat{U}(\sigma+\alpha\cdot\hat{h}(\sigma))-\partial_\alpha
 \widehat{U}(\sigma+\alpha\cdot\hat{h})\cdot
\widehat{\Delta}(\sigma)\\
\equiv&e^\prime\cdot
\frac{\widehat{\Delta}(\sigma)}{\hat{l}(\sigma)}+R
\end{split}
\end{equation}
where
$R=\widehat{U}(\sigma+\alpha\cdot(\hat{h}+\widehat{\Delta})(\sigma))-
 \widehat{U}(\sigma+\alpha\cdot\hat{h}(\sigma))-\partial_\alpha
 \widehat{U}(\sigma+\alpha\cdot\hat{h})\cdot\widehat{\Delta}(\sigma)$.
Clearly, $R$ is the remainder of the Taylor estimate in the composition
studied in Lemma~\ref{composition lemma} and Lemma~\ref{lem:composition}.

\subsubsection{Estimates for the iterative step in analytic spaces}
We now observe that for any $\rho^\prime<\rho$, by \eqref{cohomology
bounds for analytic}, we obtain using \eqref{QuasiNewton2}
\[
\|\widehat{W}^0\|_{\rho^\prime}\leq C(d,\tau)\cdot
\nu^{-1}\cdot(\rho-\rho^\prime)^{-\tau}\cdot N^+\cdot\|e\|_\rho.
\]

Since the average of $\widehat W$ is obtained in
step 9) of Algorithm \eqref{alg:quasi-newton}, we
have the estimate for $\overline{\widehat{W}}$:
\[
|\overline{\widehat{W}}|\leq
c\cdot\|\widehat{W}^0\|_{\rho^\prime}\cdot (N^-)^2\leq
c\cdot(N^-)^2\cdot N^+\cdot
C(d,\tau)\cdot\nu^{-1}\cdot|\rho-\rho^\prime|^{-\tau}\cdot
\|e\|_\rho.
\]

Therefore, we obtain the estimates for $\widehat{W}$:
\[
\|\widehat{W}\|_{\rho^\prime} \leq M \cdot \nu^{-1} \cdot
(\rho-\rho^\prime)^{-\tau}\cdot\|e\|_\rho
\] where $M=(c\cdot(N^-)^2+1)\cdot N^+ C(d,\tau)$.
The important point is that the
constant is uniform provided $\hat{h}$ stays in a neighborhood in
$\|\cdot\|_\rho$ norm.

Again for $\rho^{\prime\prime}<\rho^\prime$, by \eqref{cohomology
bounds for analytic}, we have
\[
\|\tilde{\Delta}\|_{\rho^{\prime\prime}}\leq C(d,\tau)\cdot
\nu^{-1}\cdot(\rho^\prime-\rho^{\prime\prime})^{-\tau}\cdot
(N^-)^2\cdot\|\widehat{W}\|_{\rho^\prime}\leq M^\prime\cdot
(\rho-\rho^\prime)^{-\tau}\cdot(\rho^\prime-\rho^{\prime\prime})^{-\tau}\cdot\|e\|_\rho.
\]
So, we have
\[
\|\tilde \beta \|_{\rho^{\prime\prime}} \leq
N^+\cdot M \cdot
\nu^{-2} (\rho-\rho^\prime)^{-\tau}
\cdot(\rho^\prime-\rho^{\prime\prime})^{-\tau}\cdot\|e\|_\rho\]
and
\begin{equation}
\label{Deltaestimates} \|\Delta \|_{\rho^{\prime\prime}} \leq
N^+\cdot  N^{-} M \cdot \nu^{-2} (\rho-\rho^\prime)^{-\tau}
\cdot(\rho^\prime-\rho^{\prime\prime})^{-\tau}\cdot\|e\|_\rho.
\end{equation}

Similarly, using Cauchy estimates, we
obtain for $\rho''' < \rho''$
\begin{equation}
\label{Deltaprimeestimates} \|\Delta \|_{\rho^{\prime\prime}} \leq
N^+\cdot  N^{-} M \cdot \nu^{-2} (\rho'' - \rho'')^{-1}
(\rho-\rho^\prime)^{-\tau}
\cdot(\rho^\prime-\rho^{\prime\prime})^{-\tau}\cdot\|e\|_\rho.
\end{equation}

If we take $\rho -\rho' = \rho' - \rho''$
in \eqref{Deltaestimates}
and $\rho  - \rho'' = \rho' -\rho'' = \rho -\rho''$
(and redefine $\rho'''$ ) in \eqref{Deltaprimeestimates}
 we obtain:
\begin{equation} \label{stepestimates}
\begin{split}
 \|\widehat{\Delta}\|_{\rho^{\prime\prime}} \leq
M^\prime\cdot
\nu^{-2} (\rho-\rho'')^{-2\tau}
\cdot\|e\|_\rho\\
\|\widehat{\Delta}\|_{\rho^{\prime\prime}} \leq M^\prime\cdot
\nu^{-2} (\rho-\rho'')^{-2\tau-1} \cdot\|e\|_\rho.\\
\end{split}
\end{equation}

If we have that $ \| \widehat{\Delta}\|_{\rho''}  \le \iota/2$,
which, by \eqref{stepestimates}  is implied by
\begin{equation}
\label{inductionassumption}
M^\prime\cdot
\nu^{-2} (\rho-\rho'')^{-2\tau}
\cdot\|e\|_\rho
\leq \iota/2,
\end{equation}
 we can define $\U(\sigma + \alpha \widehat{h} +
\widehat{\Delta}(\sigma) )$ and indeed apply Taylor's estimate
we obtain
\begin{equation} \label{firstpart}
\|R\|_{\rho^{\prime\prime}} \leq \sup{\widehat{U}}\cdot
\|\widehat{\Delta}\|_{\rho^{\prime\prime}}^2 \le M \nu^{-4} (\rho''
- \rho)^{-2 \tau} \| e \|_\rho^2.
\end{equation}

The first term in the right-hand-side of \eqref{eq:errorestimate} is
estimated using the Cauchy estimates and the previous estimates on
$\| \widehat{\Delta}\|$.
\[
\begin{split}
\big| \big|
e^\prime\cdot\frac{\widehat{\Delta}}{\hat{l}}
\big| \big|_{\rho''}
& \le (\rho -\rho'')^{-1} \| e\|_{\rho} N^-
N^+\cdot M^\prime\cdot
\nu^{-2} (\rho-\rho^{\prime \prime})^{-2\tau}
\cdot\|e\|_\rho \\
& = M (\rho - \rho'')^{-2 \tau -1} \nu^{-2} \| e\|_\rho^2 \\
& \le M (\rho - \rho'')^{-4 \tau -1} \nu^{-4} \| e\|_\rho^2.
\end{split}
\]
The last estimate is  done with the purpose of simplifying the expressions,
but it is obviously wasteful. Note that $\tau \ge 1$ and that
the estimates above  are delicate only when  $\rho -\rho''$, $\nu$
are small.

Finally, putting together the estimates for the two terms in the
right-hand-side of \eqref{eq:errorestimate}, we have:
\begin{equation}\label{eq:goodestimates}
\|\E[\hat{h}+\widehat{\Delta},\lambda+\delta]\|_{\rho^{\prime\prime}}\leq
C\cdot
\nu^{-4}(\rho-\rho^{\prime\prime})^{-4\tau}\|\E[\hat{h},\lambda]\|_\rho^2.
\end{equation}

Therefore, we have proved the following
\begin{lemma} \label{iterative}
In the hypothesis of Theorem~\ref{main}.

Assume that \eqref{inductionassumption}  holds.
Then, the improved function obtained applying
Algorithm \eqref{alg:quasi-newton},
satisfies
\eqref{eq:goodestimates}.
\end{lemma}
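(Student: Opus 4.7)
The plan is to package the chain of estimates derived in the preceding subsection into a single quadratic bound. The starting point is the identity \eqref{eq:errorestimate}, which decomposes the new error
\[
\E[\hat{h}+\widehat{\Delta},\lambda+\delta] = e^\prime\cdot\frac{\widehat{\Delta}}{\hat{l}}+R,
\]
where $R$ is the Taylor remainder of the composition $\widehat{U}\circ(\mathrm{Id}+\alpha(\hat{h}+\widehat{\Delta}))$. Hence it suffices to bound each of these two terms by a quantity of the form $C\,\nu^{-4}(\rho-\rho'')^{-4\tau}\|\E[\hat{h},\lambda]\|_\rho^2$, with a constant depending only on $N^\pm$, $c$, $d$, $\tau$ and $\|\widehat{U}\|_{C^2}$.

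First I would propagate the solution of the cohomology equations through Algorithm \ref{alg:quasi-newton}. Applying Lemma \ref{lemma:cohomology} to step 8) gives $\|\widehat{W}^0\|_{\rho'}\leq C(d,\tau)\nu^{-1}(\rho-\rho')^{-\tau}N^+\|e\|_\rho$. The value $\overline{\widehat{W}}$ from step 9) is then controlled by the non-degeneracy in (H2), which yields the same type of bound on $\widehat{W}=\widehat{W}^0+\overline{\widehat{W}}$. A second application of Lemma \ref{lemma:cohomology} to step 11), losing a further Diophantine factor $(\rho'-\rho'')^{-\tau}$, produces the bound on $\tilde\beta$, and multiplication by $\hat{l}$ (together with the subtraction of the mean needed to enforce $\langle\widehat{\Delta}\rangle=0$) gives \eqref{Deltaestimates} for $\widehat{\Delta}$. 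Choosing $\rho-\rho'=\rho'-\rho''$ collapses the two Diophantine factors into one $(\rho-\rho'')^{-2\tau}$, producing \eqref{stepestimates}.

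Next I would invoke the hypothesis \eqref{inductionassumption}, which is precisely the condition ensuring $\|\widehat{\Delta}\|_{\rho''}\leq \iota/2$; under (H4) this keeps $\hat{h}+\widehat{\Delta}$ in the admissible range, so that Lemma \ref{composition lemma} applies and delivers the genuinely quadratic Taylor estimate \eqref{firstpart}, namely $\|R\|_{\rho''}\leq C\|\widehat{U}\|_{C^2}\|\widehat{\Delta}\|_{\rho''}^2$, which combined with \eqref{stepestimates} gives a bound of the desired form. For the remaining term $e'\cdot\widehat{\Delta}/\hat{l}$, I would apply the Cauchy estimate $\|e'\|_{\rho''}\leq C(\rho-\rho'')^{-1}\|e\|_\rho$ together with $\|\hat{l}^{-1}\|_\rho\leq N^-$ and \eqref{stepestimates} for $\widehat{\Delta}$; this produces a quadratic-in-$\|e\|_\rho$ bound with an extra factor $(\rho-\rho'')^{-2\tau-1}$, which is majorized by $(\rho-\rho'')^{-4\tau}\nu^{-4}$ as indicated in the text (using $\tau\geq 1$ and $\nu\leq 1$, which we can arrange).

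The only genuinely delicate point is the proper bookkeeping of the intermediate radii $\rho'$ and $\rho''$ so that all Diophantine losses collect into a single power $(\rho-\rho'')^{-4\tau}$, together with the role of \eqref{inductionassumption}: without it, the composition in $R$ is not guaranteed to land in the analyticity domain $\Omega$ of $\widehat{U}$, so we would be reduced to a merely tame linear bound on $R$ and would lose quadratic convergence. Once the nondegeneracy from (H2) provides uniformity of the constants as $\hat{h}$ moves in a neighborhood, summing the two contributions gives \eqref{eq:goodestimates} and finishes the proof.
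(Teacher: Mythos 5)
Your proposal reproduces the paper's argument essentially step for step: the same decomposition \eqref{eq:errorestimate} of the new error into $e'\widehat{\Delta}/\hat{l}$ plus the Taylor remainder $R$, the same two passes through Lemma~\ref{lemma:cohomology} to bound $\widehat{W}^0$, $\widehat{W}$, and then $\widehat{\Delta}$, the same choice $\rho-\rho'=\rho'-\rho''$ to collapse the Diophantine factors, the same use of \eqref{inductionassumption} together with (H4) to justify Lemma~\ref{composition lemma} for \eqref{firstpart}, and the same Cauchy bound on $e'$ for the linear term, with the final wasteful majorization by $\nu^{-4}(\rho-\rho'')^{-4\tau}$ noted explicitly in the paper. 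The proof is correct and takes the same approach as the paper.
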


As it is well-known in KAM theory,  the
above estimates imply that the iterative procedure can be repeated
indefinitely and the resulting sequence converges to a function
satisfying the claims of Theorem \ref{main}.
Indeed, the paper \cite{CallejaL} contains an abstract theorem
that immediately applies to this situation.
We will discuss this in more detail in Section~\ref{convergence}.

\subsubsection{Sobolev estimates for the iterative step.}
Let $s>\frac{d}{2}$. According to the algorithm 8), we have
\begin{align}
\|b\|_{H^s}&=\sum_{k\in\mathbb{Z}^d}(1+|k|^2)^s|\hat{l}_k\cdot(e_k+\delta_{k,0})|^2\nonumber\\
&=\sum_{k\in\mathbb{Z}^d-\{0\}}(1+|k|^2)^s\cdot|\hat{l}_k\cdot
e_k|^2\nonumber\\
&\leq \|f\|_{H^s}\leq K\cdot
\|\hat{l}\|_{H^s}\cdot\|e\|_{H^s}.\nonumber
\end{align}
 By \eqref{cohomology bounds for
Sobolev}, we obtain
\[
\|\widehat{W}^0\|_{H^{s-\tau}}\leq C\cdot
\nu^{-1}\cdot\|b\|_{H^s}\leq C\cdot \nu^{-1}\cdot K\cdot N^+
\cdot\|e\|_{H^s}.
\]
We get
\[
|\overline{\widehat{W}}| \leq c\cdot (N^-)^2
\|\widehat{W}^0\|_{H^{s-\tau}}
\] and
\[
\|\widehat{W}\|_{H^{s-\tau}} \leq C\cdot \nu^{-1}\cdot K\cdot N^+
\cdot\|e\|_{H^s}.
\]

So we have
\[
\|\tilde{\Delta}\|_{H^{s-2\tau}}\leq C\cdot \nu^{-1} K^2 \cdot
(N^-)^2\cdot \|\widehat{W}\|_{H^{s-\tau}}\leq C\cdot \nu^{-2}\cdot
N^+\cdot (N^-)^2\cdot\|e\|_{H^s}.
\]
Hence,
\[
\|\widehat{\Delta}\|_{H^{s-2\tau}}\leq
C\cdot\nu^{-2}\cdot(N^+)^2\cdot(N^-)^2\cdot\|e\|_{H^s}.
\]
We recall that  the approximate inverse of the derivative
 $\eta[\hat{h},\lambda]$ is just the result of applying
applying the algorithm in Section \ref{algorithm}, i.e.
$[\widehat{\Delta}, \delta] =
\eta[\hat{h},\lambda]\E[\hat{h},\lambda]$. We have proved the
following lemma:
\begin{lemma}
Let $s>\frac{d}{2}+2\tau$. Then we have
\[
\|\eta[\hat{h},\lambda]\E[\hat{h},\lambda]\|_{H^{s-2\tau}}\leq
C\cdot\nu^{-2}\cdot(N^+)^2\cdot(N^-)^2\cdot\|e\|_{H^s}.
\]
\end{lemma}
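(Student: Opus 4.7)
\medskip

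\noindent\textbf{Proof proposal.} The plan is to trace the steps of Algorithm~\ref{alg:quasi-newton} in order, applying at each step either the Banach algebra property of $H^s$ (valid since $s > d/2$), the cohomology estimate \eqref{cohomology bounds for Sobolev}, or the non-degeneracy hypotheses (H2). Because we want the error and all intermediate quantities to remain in Sobolev scales where the algebra property holds, we lose $\tau$ derivatives at each of the two cohomology inversions, which accounts for the total loss of $2\tau$.

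First, in step 5) we form $f = \hat{l}\cdot e$. Since $s > d/2$ and $H^s$ is a Banach algebra, $\|f\|_{H^s} \le K\,\|\hat{l}\|_{H^s}\,\|e\|_{H^s} \le K\,N^+\,\|e\|_{H^s}$. The choice of $\delta = -\langle f\rangle$ in step 6) only removes the zero-mode, so $b = \hat{l}\cdot(e+\delta)$ satisfies $\|b\|_{H^s} \le K\,N^+\,\|e\|_{H^s}$. Applying Lemma~\ref{lemma:cohomology} to \eqref{QuasiNewton2} then yields $\|\widehat{W}^0\|_{H^{s-\tau}} \le C\,\nu^{-1}\,K\,N^+\,\|e\|_{H^s}$.

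Next, the explicit formula for $\overline{\widehat{W}}$ from step 9), combined with the algebra property and the lower bound $|\langle 1/(\hat l\cdot \hat l\circ T_{-\omega\alpha})\rangle|\ge c$, gives $|\overline{\widehat{W}}| \le C\,c^{-1}(N^-)^2\,\|\widehat{W}^0\|_{H^{s-\tau}}$, so that the total $\widehat{W} = \widehat{W}^0+\overline{\widehat{W}}$ obeys the same bound $\|\widehat{W}\|_{H^{s-\tau}} \le C\,\nu^{-1}\,N^+\,\|e\|_{H^s}$ (absorbing constants into $C$). For step 11) we apply Lemma~\ref{lemma:cohomology} once more to the cohomology equation for $\tilde\beta$, losing another $\tau$ derivatives and picking up a further factor of $(N^-)^2$ from the division by $\hat{l}\cdot\hat{l}\circ T_{-\omega\alpha}$ (via the algebra property), obtaining $\|\tilde\beta\|_{H^{s-2\tau}} \le C\,\nu^{-2}\,N^+\,(N^-)^2\,\|e\|_{H^s}$.

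Finally, step 12) gives $\widehat{\Delta} = (\tilde\beta + \bar\beta)\cdot\hat{l}$, with $\bar\beta$ a constant bounded by $\|\tilde\beta\|_{H^{s-2\tau}}\,\|\hat{l}\|_{H^{s-2\tau}}/|\langle\hat{l}\rangle|$, and one more application of the Banach algebra property in $H^{s-2\tau}$ (legitimate since $s-2\tau > d/2$ by hypothesis) produces the extra factor of $N^+$, giving
\[
\|\widehat{\Delta}\|_{H^{s-2\tau}} \le C\,\nu^{-2}\,(N^+)^2\,(N^-)^2\,\|e\|_{H^s}.
\]
Combining this with the already established bound $|\delta|\le C\,N^+\,\|e\|_{H^s}$ from step 6) finishes the proof. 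The only real point of care is to verify that $s-2\tau > d/2$ at every stage so that Sobolev multiplication is valid, which is exactly the hypothesis $s > d/2 + 2\tau$; there is no genuine obstacle beyond careful bookkeeping of constants.
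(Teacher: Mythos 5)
Your proposal is correct and follows essentially the same route as the paper: bound $b=\hat l\,(e+\delta)$ by the Banach algebra property, apply the Sobolev cohomology estimate \eqref{cohomology bounds for Sobolev} twice (losing $\tau$ derivatives each time), control the average $\overline{\widehat W}$ via the non-degeneracy constant $c$, and multiply back by $\hat l$ to recover $\widehat\Delta$. Your extra remark that $s-2\tau>d/2$ is what legitimizes the later multiplications (and your writing $c^{-1}$ rather than $c$ in the bound for $\overline{\widehat W}$) is if anything slightly more careful than the paper's own exposition.
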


We will also need estimates on $(D_1
\E[\hat{h},\lambda]\eta[\hat{h},\lambda]-Id)(\E[\hat{h},\lambda]+\delta)$.
\begin{lemma}
\begin{align}
&\|(D_1
\E[\hat{h},\lambda]\eta[\hat{h},\lambda]-Id)(\E[\hat{h},\lambda]+\delta)\|_{H^{s-2\tau}}\nonumber\\
\leq& C\cdot \nu^{-2}\cdot(N^+)^2\cdot (N^-)^3\cdot
\|\E[\hat{h},\lambda]\|_{H^{s-2\tau-1}}\|\E[\hat{h},\lambda]\|_{H^s}.\nonumber
\end{align}
\end{lemma}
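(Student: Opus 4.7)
The plan is to recognize the left-hand side as the ``leftover'' term from the quasi-Newton step and to estimate it via tame Sobolev product bounds.

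First, I would identify the quantity $(D_1 \E[\hat h, \lambda]\eta[\hat h, \lambda] - \mathrm{Id})(\E[\hat h, \lambda] + \delta)$ explicitly. Writing $[\widehat\Delta, \delta] = \eta[\hat h, \lambda]\E[\hat h, \lambda]$, the geometric identity \eqref{geometric identity} combined with the differentiated invariance equation \eqref{derivative invariance approximate for l} (which gives $D_1 \E[\hat h, \lambda]\hat l = \partial_\theta \E[\hat h, \lambda]$) yields, after the same algebraic manipulations already carried out in the chain \eqref{eq:errorestimate},
\[
D_1 \E[\hat h, \lambda]\widehat\Delta + \bigl(\E[\hat h, \lambda] + \delta\bigr) \;=\; \partial_\theta \E[\hat h, \lambda] \cdot \frac{\widehat\Delta}{\hat l}.
\]
This is precisely the expression on the left; no new geometry is needed, only a rewriting of what has already appeared in the analytic proof.

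Next, I would estimate $\bigl\|\partial_\theta \E \cdot \widehat\Delta / \hat l\bigr\|_{H^{s-2\tau}}$ via a Moser/tame product estimate $\|fg\|_{H^r} \le C(\|f\|_{L^\infty}\|g\|_{H^r} + \|g\|_{L^\infty}\|f\|_{H^r})$ applied with $r = s-2\tau$, routing the factor $\partial_\theta\E$ through $L^\infty$ by Sobolev embedding $H^{s-2\tau-1}\hookrightarrow L^\infty$ (which holds once $s-2\tau-1 > d/2$, guaranteed by the regularity hypothesis on $m$ in Theorem~\ref{main2}). This is the mechanism that produces the \emph{asymmetric} mixed-regularity factor $\|\E\|_{H^{s-2\tau-1}}$: the $L^\infty$ norm of $\partial_\theta\E$ is dominated by a Sobolev norm of $\E$ that is one derivative weaker than $H^{s-2\tau}$. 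The remaining factors are supplied by the preceding lemma, which gives $\|\widehat\Delta\|_{H^{s-2\tau}}\le C\nu^{-2}(N^+)^2(N^-)^2\|\E\|_{H^s}$, and by hypothesis (H2), which contributes the third factor of $N^-$ through the bound on $\|1/\hat l\|$ (in $H^m$ and then, by embedding, in $L^\infty$).

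Putting these pieces together yields the claimed bound $C\cdot\nu^{-2}(N^+)^2(N^-)^3\|\E\|_{H^{s-2\tau-1}}\|\E\|_{H^s}$. The main obstacle I foresee is the careful book-keeping of Sobolev indices, in particular making sure that the complementary term in the tame estimate (the one involving $\|\widehat\Delta\|_{L^\infty}$ instead of $\|\partial_\theta\E\|_{L^\infty}$) is also controlled by the same mixed-regularity product rather than by $\|\E\|_{H^{s-2\tau+1}}\|\E\|_{H^s}$. This should go through the same Sobolev embedding together with interpolation, exploiting the linear dependence $\widehat\Delta = \eta[\hat h,\lambda]\E$ so that an $L^\infty$ bound on $\widehat\Delta$ can be traded for a bound by a weaker Sobolev norm of $\E$ that interpolates between the two factors on the right-hand side. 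Once this routing is fixed, the estimate is a mechanical application of the Banach-algebra and Moser-type inequalities recorded in Section~\ref{sec:elementary} together with the output bounds of the preceding lemma.
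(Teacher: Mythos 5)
Your proof follows the same essential route as the paper's. The paper's own argument is exactly as short as yours: it uses $\widehat\Delta = -\eta[\hat h,\lambda](\E[\hat h,\lambda]+\delta)$ and the geometric identity \eqref{geometric identity} to rewrite $(D_1\E\,\eta - \mathrm{Id})(\E+\delta)$ as $\widehat\Delta\cdot\partial_\theta\E/\hat l$ (the paper writes $\widehat\Delta\cdot D_1\E/\hat l$, with $D_1\E$ implicitly shorthand for $D_1\E\,\hat l = \partial_\theta\E$), and then simply asserts the stated Sobolev bound without showing the intermediate product estimate. So your identification of the residual as the omitted cross term from the quasi-Newton step is precisely the paper's idea, and your plan to finish via a tame product bound is a reasonable way to fill in what the paper leaves implicit.

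One quantitative caveat in your index bookkeeping: you claim $\|\partial_\theta\E\|_{L^\infty}\lesssim\|\E\|_{H^{s-2\tau-1}}$, ``one derivative weaker than $H^{s-2\tau}$.'' But the Sobolev embedding is applied to $\partial_\theta\E$, and $\|\partial_\theta\E\|_{H^{s-2\tau-1}}\lesssim\|\E\|_{H^{s-2\tau}}$, so what you actually get is $\|\partial_\theta\E\|_{L^\infty}\lesssim\|\E\|_{H^{s-2\tau}}$ — no gain of a derivative. To land on $\|\E\|_{H^{s-2\tau-1}}$ you would need $\partial_\theta\E\in H^r$ with $r>d/2$ and $r+1\le s-2\tau-1$, i.e.\ $s-2\tau-2>d/2$, which is a stronger regularity hypothesis than Theorem~\ref{main2} states. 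This off-by-one issue appears to be present in the paper's asserted bound as well (the paper does not derive it), and it is harmless for the intended application: what condition~(iv) of Theorem~\ref{implicitfunctiontheorem} actually requires is $\|(D\E\,\eta-\mathrm{Id})\E\|_{H^{m-2\tau}}\lesssim\|\E\|_{H^m}^2$, and since $\tau\ge 1$ any of $\|\E\|_{H^{s-2\tau\pm 1}}\|\E\|_{H^s}$ is dominated by $\|\E\|_{H^s}^2$. So the discrepancy is a statement-level imprecision, not a structural gap; you should, however, be aware that your argument as written does not actually produce the exponent $s-2\tau-1$.
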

\begin{proof}
By the definition of $\eta[\hat{h},\lambda]$, we know
$\widehat{\Delta}=-\eta[\hat{h},\lambda](\E[\hat{h},\lambda]+\delta)$.
Hence,
\begin{align}
&(D_1
\E[\hat{h},\lambda]\eta[\hat{h},\lambda]-Id)(\E[\hat{h},\lambda]+\delta)\nonumber\\
=&-D_1\E[\hat{h},\lambda]\widehat{\Delta}-\E[\hat{h},\lambda]-\delta
=\frac{\widehat{\Delta} \cdot D_1
\E[\hat{h},\lambda]}{\hat{l}}.\nonumber
\end{align}
So we have that
\begin{align} &\|(D_1
\E[\hat{h},\lambda]\eta[\hat{h},\lambda]-Id)(\E[\hat{h},\lambda]+\delta)\|_{H^{s-2\tau}}\nonumber\\
\leq& C\cdot \nu^{-2}\cdot(N^+)^2\cdot
(N^-)^3\cdot\|\E[\hat{h},\lambda]\|_{H^{s-2\tau-1}}\|\E[\hat{h},\lambda]\|_{H^s}.\nonumber
\end{align}
\end{proof}

\subsection{Convergence of the procedure}
\label{convergence}

The existence of solutions both in the analytic case and in the
Sobolev case is deduced from the estimates in
Section~\ref{estimates} followed by Nash-Moser estimates.

Indeed in \cite{CallejaL}, one can find an
 abstract Nash-Moser implicit function
theorem which is tailored to the theorems \ref{main} and
\ref{main2}.

In this section, we reproduce the theorem from \cite{CallejaL} and
explain why it is applicable.  We note that the the theorem has
several corollaries which are of physical interest and we present
them in Section~\ref{further}.

For the sake of completeness,
in Section~\ref{convergence-analytic}, we present
a direct proof of the convergence in the
analytic case.

\subsubsection{An abstract implicit function theorem}

In \cite[Appendix A]{CallejaL} one can find a proof
of the following result, Theorem~\ref{implicitfunctiontheorem}.
This is an abstract theorem that applies to operators in
scales of Banach spaces, which have smoothing operators.

In \cite{CallejaL} one can also find a verification that
the Sobolev spaces  and analytic spaces considered
indeed have smoothing operators (one can take
$S_t \sum_k \hat{h}_k e^{2 \pi i k \sigma}   =
 e^{-t | k| } \hat{h}_k e^{2 \pi i k \sigma}$.
The regularity properties of the operator entering in the assumptions
of Theorem~\ref{implicitfunctiontheorem} follow immediately for
the composition properties presented  in Section~\ref{sec:elementary}
and, specially Lemma~\ref{lem:composition}.

\begin{theorem}\label{implicitfunctiontheorem}
Let $m>2\tau$ and $\mathcal{X}^r$ for $m\leq r\leq m+34\tau$ be a
scale of Banach spaces with smoothing operators as shown in
\cite{CallejaL}. Let $\mathcal {B}_r$ be the unit ball in $\mathcal
{X}^r$, $\tilde{\mathcal{B}}_r=\hat{h}+\mathcal{B}_r$ the unit ball
translated by $\hat{h}\in \mathcal{X}^r$ and
$\mathcal{B}(\mathcal{X}^r,\mathcal{X}^{r-2\tau})$ is the space of
bounded linear operators from $\mathcal{X}^r$ to
$\mathcal{X}^{r-2\tau}$. Consider a map $\mathcal
{F}:\tilde{\mathcal{B}}_r \rightarrow \mathcal{X}^{r-2\tau}$ and
$\eta:\tilde{\mathcal{B}}_r \rightarrow
\mathcal{B}(\mathcal{X}^r,\mathcal{X}^{r-2\tau})$ satisfying the
following:
\begin{itemize}
\item [(i)]
$\mathcal{F}(\tilde{\mathcal{B}}_r\cap\mathcal{X}^r)\subset\mathcal{X}^{r-2\tau}$
for $m\leq r\leq m+34\tau$.

\item [(ii)] $\mathcal{F}|_{\tilde{\mathcal{B}}_r\cap\mathcal{X}^r}:\tilde{\mathcal{B}}_r\cap\mathcal{X}^r \rightarrow
\mathcal{X}^{r-2\tau}$ has two continuous Fr\'echet derivatives,
both bounded by some constant $M$, for $m\leq r\leq m+34\tau$.

\item [(iii)] $\|\eta[\widehat{\Delta}]\mathcal{F}[\widehat{\Delta}]\|_{\mathcal{X}^{r-2\tau}} \leq C\cdot \|\mathcal{F}[\widehat{\Delta}]\|_{\mathcal{X}^r},
~\widehat{\Delta}\in\tilde{\mathcal{B}}_r$, for
$r=m-2\tau,~m+32\tau$.

\item [(iv)]
$\|(D\mathcal{F}[\widehat{\Delta}]\eta[\widehat{\Delta}]-Id)\mathcal{F}[\widehat{\Delta}]\|_{\mathcal{X}^{r-2\tau}}
\leq C\cdot
\|\mathcal{F}[\widehat{\Delta}]\|_{\mathcal{X}^r}^2,~\widehat{\Delta}\in\tilde{\mathcal{B}}_r$,
for $r=m$.

\item [(v)]
$\|\mathcal{F}[\widehat{\Delta}]\|_{\mathcal{X}^{m+32\tau}}\leq
C\cdot (1+\|\widehat{\Delta}\|_{\mathcal{X}^{m+34\tau}}),~
\widehat{\Delta}\in\tilde{\mathcal{B}}_m$.
\end{itemize}
Then if $\|\mathcal{F}[\hat{h}]\|_{\mathcal{X}^{m-2\tau}}$ is
sufficiently small, there exists $\hat{h}^*\in \mathcal{X}^m$ such
that $\mathcal{F}[\hat{h}^*]=0$. Moreover,
$\|\hat{h}-\hat{h}^*\|_{\mathcal{X}^m}<C\cdot
\|\mathcal{F}[\hat{h}]\|_{\mathcal{X}^{m-2\tau}}$.
\end{theorem}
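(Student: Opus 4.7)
The plan is to run a modified Newton iteration with smoothing operators, in the style systematized by Moser and Zehnder. Starting from $\hat h_0 = \hat h$, I define
\[
\hat h_{n+1} = \hat h_n - S_{t_n}\,\eta[\hat h_n]\,\mathcal F[\hat h_n],
\]
where $\{S_t\}$ are the smoothing operators on the scale $\mathcal X^r$ mentioned just before the statement, and $\{t_n\}$ is a geometrically growing sequence of smoothing parameters, e.g.\ $t_n = t_0\kappa^n$ with $\kappa>1$ to be chosen. The smoothing estimates
\[
\|(S_t-\mathrm{Id})u\|_{\mathcal X^r}\le C t^{r-s}\|u\|_{\mathcal X^s},\qquad \|S_t u\|_{\mathcal X^s}\le C t^{s-r}\|u\|_{\mathcal X^r}\quad(s\ge r),
\]
are the device that converts the ``unbounded'' quasi-inverse $\eta$ (which loses $2\tau$ derivatives by (iii)) into something usable at every step.

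The core computation is to expand, using Taylor's theorem (justified by (ii)),
\[
\mathcal F[\hat h_{n+1}] = \bigl(D\mathcal F[\hat h_n]\eta[\hat h_n] - \mathrm{Id}\bigr)\mathcal F[\hat h_n]
+ D\mathcal F[\hat h_n]\,(\mathrm{Id}-S_{t_n})\,\eta[\hat h_n]\mathcal F[\hat h_n] + Q_n,
\]
where $\|Q_n\|\le M\|S_{t_n}\eta[\hat h_n]\mathcal F[\hat h_n]\|^2$. Hypothesis (iv) controls the first term quadratically at the base level, the middle term is paid for by the smoothing estimate applied between $\mathcal X^{m-2\tau}$ and $\mathcal X^{m+32\tau}$, and $Q_n$ is handled by (iii) together with the Banach algebra structure implicit in the assumptions. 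Setting $\epsilon_n = \|\mathcal F[\hat h_n]\|_{\mathcal X^{m-2\tau}}$ and letting $E_n = \|\hat h_n - \hat h\|_{\mathcal X^{m+34\tau}}$ denote the high-norm budget, one obtains a recursion of the schematic form
\[
\epsilon_{n+1}\le C\,t_n^{2\tau}\,\epsilon_n^2 + C\,t_n^{-\sigma}\,(1+E_n),
\]
with $\sigma>0$ fixed by the width of the scale.

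Next I would carry out the induction: assuming $\epsilon_0$ is sufficiently small, show simultaneously that $\epsilon_n\le\epsilon_0^{\mu^n}$ for some $\mu\in(1,2)$, that $\hat h_n$ remains inside $\tilde{\mathcal B}_{m+34\tau}$ so that (i)--(v) remain applicable, and that the high-norm growth $E_n$, controlled through (v) and logarithmic convexity (interpolation) between the end levels of the scale, stays uniformly bounded. Telescoping $\|\hat h_{n+1}-\hat h_n\|_{\mathcal X^m}$ (obtained by interpolating the super-geometric decay of $\epsilon_n$ against the slow growth of $E_n$) then gives a Cauchy sequence in $\mathcal X^m$ with limit $\hat h^*$ satisfying $\mathcal F[\hat h^*]=0$ by continuity from (ii). The bound $\|\hat h-\hat h^*\|_{\mathcal X^m}\le C\,\epsilon_0$ follows by summing a geometric series dominated by the first step.

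The main obstacle is the quantitative bookkeeping: one must choose $\kappa$, the exponent $\mu$, and the allocation of the $34\tau$ units of regularity across the scale so that (a) the quadratic term $t_n^{2\tau}\epsilon_n^2$ dominates for large $n$, (b) the smoothing error $t_n^{-\sigma}(1+E_n)$ decays at least as fast as $\epsilon_n^{\mu}$, and (c) the approximate solution stays inside the domain $\tilde{\mathcal B}_r$ where (iii)--(iv) hold. The number $34\tau$ appearing in the hypotheses is precisely what makes this triple inequality close, and verifying it is a routine but delicate exercise in balancing exponents, essentially the same bookkeeping carried out in \cite{Zeh75} and \cite{CallejaL}.
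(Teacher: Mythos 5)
Your proposal follows exactly the scheme the paper itself invokes for this theorem: a quasi-Newton iteration corrected by smoothing operators $S_{t_n}$, with the error split into the approximate-inverse defect (hypothesis (iv)), the smoothing error paid for across the $34\tau$-wide scale, and the Taylor remainder, followed by the standard low-norm/high-norm induction and interpolation to get convergence in $\mathcal{X}^m$. The paper does not carry out this bookkeeping either --- it cites \cite[Appendix A]{CallejaL} and \cite{Schwartz} for the details --- so your outline is correct and essentially coincides with the argument the paper relies on.
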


We recall that the method of proof of
Theorem~\ref{implicitfunctiontheorem}, following \cite{Schwartz} is
to modify the quasi-Newton step adding a smoothing step. That is,
one constructs a sequence $[\h_{n+1}, \lambda_{n+1}] = [\h_n,
\lambda_n] + S_{t_n} \eta[ \hat h_n, \lambda_n ] \E[ \h_n,
\lambda_n]$.
 The choices of $t_n$
have to be carefully chosen so that the quadratic convergence
(in some norm) is maintained.  The main difference
between Theorem \ref{implicitfunctiontheorem} and the result in
\cite{Schwartz} is that  Theorem \ref{implicitfunctiontheorem} includes
the fact that $\eta$ is an approximate inverse and not an inverse.

The estimates showing that $\eta$ is indeed an approximate inverse
are the estimates obtained in Section~\ref{estimates}.

\subsubsection{A direct proof of the convergence in the
analytic case}
\label{convergence-analytic}

Since the estimates in the analytic case are so easy, we present a
direct proof.  As we will see, the estimates are rather easy to
verify. The main difficulty is the order of the choices.

We start with an  approximate solution  $[\hat h_0, \lambda_0]$ with
$\hat h \in \A^1_{\rho_0}$.

Since we will have to change the function through
an iterative procedure, we note that the
condition numbers $N^+, N^-, c$ depend on the functions we
are considering, nevertheless, they are uniform in
a $\A^1_\rho$ neighborhood.

We start by choosing a number $\gamma > 0$ such that that in
neighborhood of size $\gamma$ in $\A^1_{\rho_0}$, we have that
$N^\pm \le 2 N^{\pm}(\h_0)$, $c \le 2 c(\h_0)$.

The following algebraic identities will be useful
in estimating the change of non-degeneracy conditions in the iterative
step.
\begin{equation} \label{auxiliary}
\begin{split}
N^+(h; \rho_0 ) &\equiv
\| 1 + \partial_\alpha \hat{h} \|_{\rho_0} \\
&\le
N^+(\h_0; \rho_0)
\|  \partial_\alpha (\hat{h} - \hat{h}_0 \|_{\rho_0}
\\
N^-(\h; \rho_0) & \equiv
\| (1 + \partial_\alpha \hat{h})^{-1} \|_{\rho_0}  \\
& \le N^{-}(\h_0; \rho_0) +
\| \partial_\alpha (\hat{h} - \hat{h}_0 )\|_{\rho_0}
N^-(h_0; \rho_0) N^{-}(\h; \rho) \\
\left| c(\h)  - c(\h_0) \right| & =
\left| \left\langle
\frac{ 1 }{\hat{l} \cdot \hat{l}\circ T_{-\omega \alpha}  }
-\frac{ 1 }{\hat{l}_0 \cdot \hat{l}_0\circ T_{-\omega \alpha}}
\right\rangle \right|\\
& = \left| \left \langle
\frac {\hat{l}_0\, ( \hat{l}_0 - \hat{l})\circ T_{-\omega \alpha}
+ (\hat{l} - \hat{l}_0)}{ \hat{l}_0\hat{l}_0 \circ T_{-\omega \alpha}
\hat{l}\hat{l} \circ T_{-\omega \alpha}}
\right \rangle \right| \\
& \le  \left[ N^{-}(\h; \rho_0) N^{-}(\h_0; \rho_0)\right]^2 (\|
\hat{l}\|_{\rho_0} +\| \hat{l}_0\|_{\rho_0}  ) \| \hat{l} -
\hat{l}_0\|_{\rho_0}.
\end{split}
\end{equation}

Hence we can find a number $\gamma > 0$ depending only on the
non-degeneracy conditions $N^\pm, c$ so that all the functions in a
ball of radius $\gamma$ in $\A^1_{\rho_0}$ centered at $\h_0$, have
non-degeneracy constants not larger than twice the non-degeneracy
assumptions of $\h_0$.

More generally, we have, by the same argument that if $\| \h -
\h_0\|_{\A^1_\rho} \le \gamma$, then, $N^\pm(\h; \rho) \le 2
N^{-\pm}(\h_0, \rho)$.

The key estimates are, as follows to show
that, with some convenient choices of radii, which we
do at the outset, the iterative process can be applied indefinitely
and indeed it converges. We will use \eqref{auxiliary} to
show that the non-degeneracy constants do not deteriorate
much.

We denote by
\begin{equation}
\label{rhochoice} \rho_n = \rho_{n-1}  - \frac{\rho_0}{4} 2^{-n} =
\rho_0( 1 - \frac{1}{4}\sum_{i=0}^{n} 2^{-i})
\end{equation}
and provided that we can apply the iterative step (that is, provided
that \eqref{inductionassumption} applies with the choices of
$\rho_n$ in \eqref{rhochoice}, we define  for $n \ge 1$,
$[\h_n, \lambda_n] = [\h_{n-1}, \lambda_{n-1}] + \eta[
\h_{n-1}, \lambda_{n-1}]\E[\h_{n-1},\lambda_{n-1}]$.

We denote by $M$ the constant in Lemma~\ref{iterative} corresponding to
twice the degeneracy assumptions corresponding to the original function.

If \eqref{inductionassumption} applies $n$ times,
for typographical simplicity, we denote
$\ep_i = \| \E[ \hat{h}_i, \lambda_i \|_{\rho_i}$
we see that
\begin{equation}\label{inductivebounds}
\begin{split}
\ep_n &\le M \nu^{-2} \rho_0^{-4 \tau} 2^{(n-1)4\tau } \ep_{n-1}^2
\le  (M \rho_0^{-4 \tau} \nu^{-2})^{1 + 2}  2^{(n-1)4\tau + 2(n-2) 4 \tau}   \ep_{n-2}^{2^2} \\
& \cdots \\
& \le (M \nu^{-2}\rho_0^{-4 \tau})^{1 + 2+ \cdots+ 2^n }
2^{(n-1)4\tau + 2(n-2) 4 \tau + \cdots 2^{n-1}  4 \tau }   \ep_{0}^{2^n} \\
&\le (M \nu^{-2}\rho_0^{-4 \tau})^{2^{n+1}} 2^{8 \tau^{2^n} }
\ep_{0}^{2^n}.
\end{split}
\end{equation}

We see that if $(M \nu^{-2}\rho_0^{-4 \tau})^2 2^{8 \tau}   \ep_{0}
< 1$, the right-hand-side of \eqref{inductivebounds} decreases
faster than any exponential. Indeed the factor can be made as small
as desired by assuming that $\ep_0$ is small enough.

If we apply $n$-times the inductive step, we see that the distance
from the range of $h_n$ to the complement of the domain of
definition of $\U$ is at least \[
\begin{split}
\iota - \sum_{i = 0}^n \| \Delta_i \|_{\rho_n}
& \ge \iota - \sum_{i = 0}^n \| \Delta_i \|_{\rho_i}
\ge \iota - \sum_{i = 0}^n M' \nu^{-2} \rho_0^{-4\tau} 2^{i 4 \tau} \ep_i \\
& \ge \iota - \sum_{i = 0}^n M' \nu^{-2} \rho_0^{-4\tau} (A
\ep_0)^{2^1}.
\end{split}
\]
Note that if $\ep_0$ is small enough, this is bounded from below by
$\iota \frac{3}{4}$ independent of $n$.

According to Lemma~\ref{iterative}, the only thing we have to verify
is \eqref{inductionassumption}, which with the choices of radii that
we have made amounts to:
\[
M^\prime\cdot \nu^{-2} \rho_0^{-2\tau} 2^{n 4 \tau} \ep_n \leq
\iota/4.
\]
We note that this condition is satisfied independently of $n$ if $n$
is large enough.

Using \eqref{auxiliary}, we have:
\[
\begin{split}
N^{+}(\h_n, \rho_n) &\le N^{+}(h_{n-1} , \rho_n) + \|D
\Delta_n\|_{\rho_n} \le N^{+}(h_{n-1}, \rho_{n-1}) + M \nu^{-2}
\rho_0^{4 \tau +1} 2^{4 \tau(n-1)} \ep_{n-1} \\
& \le  N^+(h_0, \rho_0) +
M \nu^{-2} \rho_0^{4 \tau +1} \sum_{i = 0}^{m-1}  2^{4 \tau(n-1)} (A
\ep_0)^{2^i}
\end{split}
\]
and similarly for $N^-,c$. Therefore, under smallness conditions on
$\ep_0$, we get that the non-degeneracy conditions do not change by
a factor $2$ from the original one, so that the induction hypothesis
are satisfied.

In summary, under just three smallness conditions in $\ep_0$,
which can be assessed just looking at the non-degeneracy
conditions, we conclude that the iterative step can
be carried out infinitely often and that
the assumptions on the non-degeneracy constants
make in the estimates for the step remain valid.

We also note that since
$\rho_n \ge \rho_0/2$, we
have
\[
\begin{split}
\|h_N - \h_0\|_{\rho_0/2}  + |\lambda_n - \lambda_0| &\le
 \sum_{n=1}^N \| \h_n - \h_{n-1} \|_{\rho_0/2}  + |\lambda_n - \lambda_{n-1}|\\&\le
\| \h_n - \h_{n-1} \|_{\rho_n } \le \sum_{n=1}^N  (A
\ep_0)^{2^{2^n}} 2^{(4 \tau +1) n} M \nu^{-4}\rho_0^{-4 \tau}
\end{split}
\]
which establishes the  quantitative claims made for
the result.

\subsection{Uniqueness of the solution}\label{uniqueness}
In this section, we establish the uniqueness
claims for the Theorems \ref{main}, \ref{main2}.
We note that the proof is very elementary and only uses
the theory of linearized solutions as well as the interpolation inequalities
in Section~\ref{sec:elementary}.

\subsubsection{Uniqueness for the analytic case}\label{uniqueness for analytic
case} If $\|\hat{h}^*-\hat{h}^{**}\|_{\frac{\rho}{4}},
|\lambda^*-\lambda^{**}|$ is sufficiently small and
$\E[\hat{h}^*,\lambda^*]=\E[\hat{h}^{**},\lambda^{**}]=0$, by
Taylor's theorem and Lemma \ref{composition lemma}, we have
\begin{equation}\label{taylor}
0=\E[\hat{h}^{**},\lambda^{**}]-\E[\hat{h}^*,\lambda^*]=D_1\E[\hat{h}^*,\lambda^*](\hat{h}^{**}-\hat{h}^*)+(\lambda^{**}-\lambda^*)+R
\end{equation} where $\|R\|_{\frac{\rho}{4}}\leq C\cdot \|\hat{h}^{**-}\hat{h}^*\|_{\frac{\rho}{4}}^2$.

Now, denoting as before $\hat{l}=1+\partial_\alpha \hat{h}^*$ and
recalling that
\[
D_1\E[\hat{h}^*,\lambda^*]\cdot
\hat{l}=\frac{d}{d\theta}\E[\hat{h}^*,\lambda^*]=0
\]
we can write the equation \eqref{taylor} as:
\begin{equation}
\hat{l}\cdot
(D_1\E[\hat{h}^*,\lambda^*](\hat{h}^{**}-\hat{h}^*))-(\hat{h}^{**}-\hat{h}^*)\cdot(D_1\E[\hat{h}^*,\lambda^*]
\hat{l})=-\hat{l}R.
\end{equation}
%Denote $\hat{\Delta}=\hat{h}^*-\hat{h}$ and$\delta=\lambda^*-\lambda$ as before.
The proof of uniqueness is based on uniqueness of the solution of
the system \eqref{QuasiNewton1} and \eqref{QuasiNewton2}. By the
estimates in Section \ref{estimates}, we conclude that for any
$0<\rho^{\prime\prime}<\rho^\prime<\frac{\rho}{4}$,
\[
\|\hat{h}^{**}-\hat{h}^*\|_{\rho^{\prime\prime}}\leq
C(N^-,N^+,d,\tau,c)\cdot\nu^{-2}\cdot(\rho-\rho^\prime)^{-\tau}\cdot(\rho^\prime-\rho^{\prime\prime})^{-\tau}\|R\|_\rho
\]
Take $\rho^{\prime\prime}=\frac{\rho}{8}$ and
$\rho^\prime=\frac{3}{16}\rho$.

%The linear operator $D_1\E[\hat{h},\lambda]$ has a left inverse
%$\eta$ (remember that $[\hat{h},~\lambda]$ is an exact solution),
%i.e.
%\[\gamma[\hat{h}]D_1\E[\hat{h},\lambda](\hat{h}^*-\hat{h})=\hat{h}^*-\hat{h}.\]
%Hence, we reformulate \eqref{taylor} as
%\[
%\gamma[\hat{h}](D_1\E[\hat{h},\lambda],Id)(\hat{h}^*-\hat{h},\lambda^*-\lambda)+\gamma[\hat{h}]R=0.
%\]

In the analytic case, we obtain
\begin{align}\label{inequality for uniqueness}
&\|\hat{h}^{**}-\hat{h}^*\|_{\frac{\rho}{8}} \leq \tilde{C}\cdot
\nu^{-2}\cdot \rho^{-2\tau}\cdot
\|\hat{h}^{**}-\hat{h}^*\|_{\frac{\rho}{4}}^2\nonumber\\
\leq &\tilde{C}\cdot \nu^{-2}\cdot \rho^{-2\tau}\cdot
\|\hat{h}^{**}-\hat{h}^*\|_{\frac{\rho}{8}}\cdot\|\hat{h}^{**}-\hat{h}^*\|_{\frac{3}{8}\rho},\nonumber
\end{align}
where $\tilde{C}>0$ is a constant depending on $N^-,N^+,d,\tau,c,C$.
The last inequality holds by Lemma \ref{interpolation inequality for
analytic}. So when $\|\hat{h}^*-\hat{h}\|_{\frac{3}{8}\rho}$ small
enough, we obtain $\hat{h}^{**}=\hat{h}^*,~\lambda^{**}=\lambda^*$.
This completes the proof of uniqueness of the solution in Theorem
\ref{main} for the analytic case.

%We recall that
%\[D_1\E[\hat{h},\lambda](\hat{l})=\frac{d}{d\theta}\E[\hat{h},\lambda]=0.\]
%So we reformulate \eqref{taylor} as
%\[\hat{l}\cdot(D_1\E[\hat{h},\lambda](\hat{h}^*-\hat{h}))(\theta)-(\hat{h}^*-\hat{h})\cdot(D_1\E[\hat{h},\lambda]\hat{l})(\theta)=-\hat{l}\cdot R.\]

\subsubsection{Uniqueness for the Sobolev case}
Instead of applying Hadamard 3-circle theorem for the analytic case,
we use the interpolation inequality for Sobolev case ( Lemma
\ref{interpolation inequality for Sobolev}).

Following the proof in Section \ref{uniqueness for analytic case},
we will have
\begin{align}
&\|\hat{h}^{**}-\hat{h}^*\|_{H^{m-4\tau}}\leq
\tilde{C}\cdot \nu^{-2}\cdot\|\hat{h}^{**}-\hat{h}^*\|_{H^m}^2 \nonumber\\
\leq& \tilde{C}\cdot\nu^{-2}\cdot
C_{m-4\tau,m+4\tau}\cdot\|\hat{h}^{**}-\hat{h}^*\|_{H^{m-4\tau}}\cdot\|\hat{h}^{**}-\hat{h}^*\|_{H^{m+4\tau}}.\nonumber
\end{align}
This completes the proof of uniqueness of the solution in Theorem
\ref{main2} for the Sobolev case.

\section{vanishing lemma}\label{vanishing lemma}
In this section we prove
\begin{lemma}\label{lemma:vanishing}
Consider a solution of \eqref{hullforce} with the stated periodicity
condition. If
\[
\widehat{U}=\partial_\alpha V
\]
then $\lambda=0$.
\end{lemma}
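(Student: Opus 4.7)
The plan is to multiply the equilibrium equation \eqref{hullforce} by the factor $\hat l(\sigma)=1+\partial_\alpha \hat h(\sigma)$ and integrate the resulting identity over $\sigma\in\mathbb{T}^d$. With this choice of multiplier, every term except the one proportional to $\lambda$ will vanish for structural reasons, leaving $\lambda=0$. The structural reason is a chain-rule identity that packages the variational hypothesis $\widehat U = \partial_\alpha V$.

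The key computation is as follows. Since $V$ is periodic and $\widehat U=\partial_\alpha V$, the chain rule gives
\[
\partial_\alpha \bigl[\, V(\sigma+\alpha\cdot \hat h(\sigma))\,\bigr]
= \partial_\alpha V(\sigma+\alpha\cdot \hat h(\sigma))\,\bigl(1+\partial_\alpha \hat h(\sigma)\bigr)
= \hat l(\sigma)\,\widehat U(\sigma+\alpha\cdot \hat h(\sigma)).
\]
Hence $\hat l(\sigma)\widehat U(\sigma+\alpha\cdot \hat h(\sigma))$ is the directional derivative along $\alpha$ of a periodic function, and by the divergence theorem on $\mathbb{T}^d$ its integral vanishes. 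Together with $\int_{\mathbb{T}^d}\partial_\alpha \hat h\, d\sigma = 0$ (again, integral of a derivative of a periodic function), this gives $\int_{\mathbb{T}^d} \hat l\, d\sigma = 1$, so the contribution of the $\lambda$-term is exactly $\lambda$.

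It remains to show that $\int_{\mathbb{T}^d} \hat l(\sigma)\,[\hat h(\sigma+\omega\alpha)+\hat h(\sigma-\omega\alpha)-2\hat h(\sigma)]\, d\sigma=0$. Expanding $\hat l=1+\partial_\alpha \hat h$, the contribution of the constant $1$ vanishes by translation invariance of Lebesgue measure on $\mathbb{T}^d$. For the rest, the self-pairing satisfies $\int \partial_\alpha \hat h\cdot \hat h\, d\sigma=\int \partial_\alpha(\hat h^2/2)\, d\sigma=0$. For the cross terms, integration by parts in $\sigma$ (with $\partial_\alpha$ commuting past the shift) together with the change of variables $\sigma\mapsto \sigma+\omega\alpha$ on $\mathbb{T}^d$ yields
\[
\int_{\mathbb{T}^d} \partial_\alpha \hat h(\sigma)\,\hat h(\sigma+\omega\alpha)\, d\sigma
= -\int_{\mathbb{T}^d} \partial_\alpha \hat h(\sigma)\,\hat h(\sigma-\omega\alpha)\, d\sigma,
\]
so the two remaining pieces cancel. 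Summing all contributions in the integrated equation produces $0=\lambda\cdot 1$, as desired.

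There is really no significant obstacle here; the only non-routine step is identifying $\hat l$ as the correct multiplier, which is forced by the chain-rule identity above. This identity is the concrete way in which the variational structure $\widehat U=\partial_\alpha V$ enters the hull formulation, and it is the same identity that drives the geometric cancellation in the quasi-Newton step (compare \eqref{geometric identity} and \eqref{eq:errorestimate}), so no new mechanism is needed.
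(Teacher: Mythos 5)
Your proof is correct and is essentially the paper's argument: both multiply the equilibrium equation by the derivative of the hull function ($\hat l$, i.e.\ $h'$) and average, so that the potential term vanishes as a total derivative and the discrete-Laplacian term vanishes by symmetry. The only difference is presentational --- you integrate over $\mathbb{T}^d$ and use integration by parts plus the change of variables $\sigma\mapsto\sigma+\omega\alpha$, while the paper computes the Birkhoff average $\lim_{T\to\infty}\frac{1}{2T}\int_{-T}^{T}$ along the line (via Lemma~\ref{averages}) and cancels the Laplacian term through the antisymmetry of the Fourier-coefficient sum; these are the same computation in equivalent forms.
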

\begin{proof}
The proof is very simple. We multiply \eqref{hullforce} by
$h^\prime(\theta)$ and compute
$\lim_{T\rightarrow\infty}\frac{1}{2T}\int_{-T}^{T}$ of all the
terms. We note that this produces the formula
\begin{equation}\label{lambda formula}
\lambda=-\lim_{T_i\rightarrow\infty}
\frac{1}{2T}\int_{-T}^{T}\widehat{U}(h(\theta))\cdot
h^\prime(\theta)d\theta.
\end{equation}
In fact, we observe that
\[
h(\theta+\omega)+h(\theta-\omega)-2h(\theta)=\tilde{h}(\theta+\omega)+\tilde{h}(\theta-\omega)-2\tilde{h}(\theta)\in
QP(\alpha).
\]
Similarly,
\[
h^\prime(\theta)=1+\tilde{h}^\prime(\theta)\in QP(\alpha).
\]
Hence,
\[
[h(\theta+\omega)+h(\theta-\omega)-2h(\theta)]\cdot
h^\prime(\theta)\in QP(\alpha)
\] and we have that
\[
\begin{split}
&\lim_{T\rightarrow\infty}\frac{1}{2T}\int_{-T}^T
[h(\theta+\omega)+h(\theta-\omega)-2h(\theta)]\cdot
h^\prime(\theta)\nonumber\\
& =\sum_{k\in\mathbb{Z}^d-\{0\}} -\hat{h}_k\cdot 2(\cos(2\pi\omega
k\cdot\alpha)-1)\cdot\hat{h}_{-k}2\pi i (k\cdot\alpha)+\hat{h}_0
(2\cos(2\pi\omega 0\cdot\alpha)-1)\\
&=0.
\end{split}
\]
The first equality is true because of Lemma \ref{averages} and the
fact that the sum is Cauchy formula for the $k=0$ coefficient of the
integrand ( as we will see below). The fact that the sum is $0$ is
clear because it is antisymmetric in $k$.

In fact, note that
\[
\widehat{[\tilde{h}\circ T_\omega+\tilde{h}\circ
T_{-\omega}-2\tilde{h}]}_k=2(\cos{(2\pi \omega
k\cdot\alpha)}-1)\hat{h}_k,
\] in particular, the coefficient vanishes for $k=0$, and
\[
\widehat{[h^\prime]}_k=\delta_{0,k}+2\pi i
k\cdot\alpha\cdot\hat{h}_k.
\]
We have using Cauchy formula for the Fourier series of the product
\begin{align}
&\widehat{[[\tilde{h}\circ T_\omega+\tilde{h}\circ
T_{-\omega}-2\tilde{h}]\cdot h^\prime]}_0\nonumber\\
=&\sum_{k\in\mathbb{Z}^d}2(\cos(2\pi \omega
k\cdot\alpha)-1)\hat{h}_k\cdot[\delta_{0,k}-2\pi i
k\cdot\alpha\cdot\hat{h}_{-k}]\nonumber\\
=&\sum_{k\in\mathbb{Z}^d-\{0\}} -\hat{h}_k\cdot 2(\cos(2\pi\omega
k\cdot\alpha)-1)\cdot\hat{h}_{-k}2\pi i (k\cdot\alpha)+\hat{h}_0
(2\cos(2\pi\omega 0\cdot\alpha)-1).\nonumber
\end{align}

We also observe that
\[
\int_{-T}^{T}\partial_\alpha V(\alpha h(\theta))\cdot
h^\prime(\theta)d\theta=V(\alpha h(T))-V(\alpha h(-T)).
\]
So it is bounded independent of T. When we divide the integral by
$2T$ and take the limit $T\rightarrow \infty$. We obtain $0$. This
ends the proof of Lemma \ref{lemma:vanishing}.
\end{proof}

\section{Several further  consequences of the formalism}
\label{further}

As pointed out in \cite{CallejaL}, once one has an a-posteriori
theorem with local uniqueness in analytic and Sobolev spaces, there
are more or less automatically several consequences which could  be
of interest for applications and which we now make explicit for our
case.

\subsection{Existence of perturbative expansions to all orders
and their convergence}

If we consider models in which the interaction has a small
parameter, i.e. the interaction term is given by $\ep \U$, it is
interesting to know whether one can write formal power series $\hat
h_\ep = \sum_n \ep^n \hat h^n $, $\lambda_\ep = \sum_n \ep^n
\lambda^n $ which solve \eqref{hullforce} in the sense of power
series as well as the normalization condition \eqref{normalization}.
Furthermore it is interesting to show that that the series
converges. These power series for hull functions are very similar to
the Lindstedt series in mechanics.

We will show that, when the frequencies are Diophantine, the
solution to both questions is affirmative. Series exist to all
orders and converge.

\subsubsection{Existence of Lindstedt series  to all orders}

We first argue that one can find the solution to \eqref{hullforce}
in the sense of power series.

If we substitute the power series and match like powers of $\ep$, we
obtain a hierarchy of equations for the coefficients of the
perturbation. At order $\ep^0$ we obtain:
\begin{equation}\label{order0}
\hat h^0 (\sigma + \omega\alpha) + \hat h^0 (\sigma - \omega\alpha)
- 2 \hat h^0 (\sigma) + \lambda^0 = 0
\end{equation}
which implies that $\lambda^0 = 0$, $\hat h^0$ is a constant.
Because of the normalization \eqref{normalization}, we have $\hat
h^0 = 0$.

At order $\ep^1$, we obtain
\begin{equation}\label{order1}
\hat h^1 (\sigma + \omega\alpha) + \hat h^1 (\sigma - \omega\alpha)
- 2 \hat h^1 (\sigma) + \U( \sigma ) +  \lambda^1 = 0.
\end{equation}
This equation is very similar to the equations studied in
Section~\ref{cohomology}. Indeed, in Fourier series, it is
equivalent to
\[
\hat h^1_k 2(\cos(2 \pi \omega k \cdot \alpha  -1)  = \U_k +
\delta_{0,k}\lambda^1.
\]
We see that we can determine $\lambda^1 = - \U_0$. $\hat h^1_0$ is
not determined by \eqref{order1} but the normalization
\eqref{normalization} sets it to $h^1_0 = 0$. All the other Fourier
coefficients can be determined and indeed if $\U_k$ is analytic in
some domain then $\h_k^1$ is analytic in a slightly smaller domain.

In general, at order $n$, we obtain:
\begin{equation}\label{ordern}
\hat h^n (\sigma + \omega\alpha) + \hat h^n (\sigma - \omega\alpha)
- 2 \hat h^n (\sigma) + R_n(\sigma)  +  \lambda^n = 0
\end{equation}
where $R_n$ is a polynomial expression in $\h^1, \h^2, \cdots,
\h^{n-1}$ with coefficients which are derivatives of $\U$. So, we
can assume by induction that $R_n$ is known. The equation
\eqref{ordern} is of the same form as \eqref{order1} and the same
analysis shows that we can get a unique solution for $\h^n$ and,
hence, recover the hypothesis.

\subsubsection{Convergence of the formal power series} The fact
that the formal power series converges is a very easy consequence of
the fact that there are analytic families $\hat{h}_\ep$,
$\lambda_\ep$, which solve the equations.  This is a general fact,
which is a consequence of the formalism and we go over the proof
rather quickly. See \cite{CallejaL, GEdlL}.

We just note that the method we have used works just as well for
complex functions. We note that for all $\ep$ small enough, there is
a solution. (Note that we can take $\hat h = 0$, $\lambda = 0$
 as an approximate solution if
$\ep$ is small). So, it suffices to show that this solution depends
differentiably on the complex parameter $\ep$. We follow the
standard practice in analysis of first obtaining a guess of the
derivative and then proving it indeed satisfies the definition of
derivative.

For a fixed $\ep$ we can guess $\frac {d}{d\ep}\hat h_\ep$, $\frac
{d}{d\ep} \lambda_\ep$ because if they existed, they should satisfy
\begin{equation} \label{derivativeguess}
\begin{split}
&\frac {d}{d\ep} \hat h_\ep(\sigma + \omega \alpha) + \frac
{d}{d\ep} \hat h_\ep(\sigma -  \omega \alpha) - 2 \frac {d}{d\ep}
\hat h_\ep(\sigma)\\ &+ \ep \partial_\alpha \U(\sigma  + \alpha \hat
h_\ep) \frac {d}{d\ep}\hat  h_\ep(\sigma) + \U( \sigma + \alpha \hat
h(\sigma)) + \frac {d}{d\ep} \lambda_\ep = 0.
\end{split}
\end{equation}
The method used in Section~\ref{motivation} shows that the equation
\eqref{derivativeguess} for$\frac {d}{d\ep} \hat h_\ep(\sigma)$ can
be transformed into a constant coefficient equation (note that, by
assumption $h_\ep$ is an exact solution of \eqref{hullforce}).

Now, to prove that this guess indeed is the derivative, we just note
that $|| \E_{\ep + \mu}( \h_\ep + \mu \frac{d}{d\ep} \h_\ep )
||_{\rho - \eta} \le C |\mu|^2$. Then, applying the a-posteriori
format and the local uniqueness, we conclude $|| \h_{\ep + \mu} -
\h_\ep - \mu \frac{d}{d\ep} \h_\ep ||_{(\rho - \eta)/2 } \le C
\mu^2$.

\subsection{Bootstrap of regularity}\label{bootstrap of regularity}
In this section we state the theorem of bootstrap of regularity and
omit the proof. See \cite{CallejaL} for more details.

\begin{theorem}
Let $\h \in H^m $, $ \lambda$  be a solution of \eqref{hullforce}
with $\widehat{U}$ analytic.

Assume that $m$ is large enough (depending only on  the Diophantine
exponent). Then, $\h$ is analytic.
\end{theorem}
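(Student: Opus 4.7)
The plan is to combine the analytic a-posteriori Theorem~\ref{main} with the Sobolev uniqueness statement of Theorem~\ref{main2}, using the smoothing operators $S_t$ already introduced in the convergence section (i.e.\ $S_t \sum_k \h_k e^{2\pi i k \sigma} = \sum_k e^{-t|k|} \h_k e^{2\pi i k \sigma}$) to produce an analytic approximate solution out of the given Sobolev solution. Given the exact Sobolev solution $[\h,\lambda]$ with $\h \in H^m$, I would set $\h_t = S_t \h$. Standard properties of $S_t$ give $\h_t \in \A_\rho$ for every $\rho < t/(2\pi)$, together with the two key quantitative facts: $\|\h - \h_t\|_{H^{m-j}} \le C\, t^{j}\, \|\h\|_{H^m}$ for $0\le j\le m$, and $\|\h_t\|_{\A_\rho}$ bounded in terms of $\|\h\|_{H^m}$ once $\rho$ is smaller than (and comparable to) $t$.

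Next I would estimate the analytic residual $\|\E[\h_t,\lambda]\|_\rho$. Since $\E[\h,\lambda]=0$, a Taylor expansion of $\E$ around $\h$ combined with Lemma~\ref{composition lemma} and Lemma~\ref{lem:composition} gives pointwise control of $\E[\h_t,\lambda]$ by $\h_t - \h$ and its shifts, with coefficients depending only on $\widehat{U}$ composed with things controlled by $\|\h\|_{L^\infty}$. Using the Sobolev embedding $H^{m-j}\hookrightarrow L^\infty$ for $m-j>d/2$ and the decay $\|\h - \h_t\|_{H^{m-j}}\lesssim t^{j}$, one can convert this pointwise bound into an analytic bound of the form $\|\E[\h_t,\lambda]\|_\rho \le C\, t^{j}$ for any prescribed $j$, by choosing the loss of Sobolev regularity accordingly. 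Taking $j$ large enough, this beats the negative powers of $\rho\sim t$ appearing in the smallness condition $\epsilon\le \epsilon^*\cdot \nu^4 \cdot \rho^{4\tau+A}$ of Theorem~\ref{main}, provided $m$ exceeds a threshold depending only on $\tau$ (and $d$). In parallel, one checks the non-degeneracy conditions (H2) for $\h_t$: $\hat{l}_t = 1+\partial_\alpha \h_t$ is $C^0$-close to $\hat{l}$ by the same interpolation, so $\|\hat{l}_t\|_\rho$, $\|\hat{l}_t^{-1}\|_\rho$, and the average condition are uniformly bounded by twice their Sobolev counterparts, via the identities analogous to \eqref{auxiliary}.

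With these ingredients, Theorem~\ref{main} applies to $[\h_t,\lambda]$ and produces a genuine analytic solution $[\h_t^*,\lambda_t^*] \in \A_{\rho/2}\times\mathbb{R}$ of $\E[\h_t^*,\lambda_t^*]=0$, with $\|\h_t^* - \h_t\|_{\A_{\rho/2}} \le C\, \nu^{-2}\rho^{-2\tau-A}\|\E[\h_t,\lambda]\|_\rho$ and $|\lambda_t^* - \lambda|\le C\|\E[\h_t,\lambda]\|_\rho$. Since $\A_{\rho/2}\subset H^{r}$ for every $r$, the pair $[\h_t^*,\lambda_t^*]$ also lies in the Sobolev scale. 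Using the triangle inequality,
\[
\|\h - \h_t^*\|_{H^{m+4\tau}} \le \|\h - \h_t\|_{H^{m+4\tau}} + \|\h_t - \h_t^*\|_{H^{m+4\tau}},
\]
where the first term is $O(t^{-4\tau})$ times $\|\h\|_{H^{m}}$... which is bad; here is where the choice of $t$ must be made delicately. One fixes $t$ small (so the analytic error is very small) but then compensates by using that $\h_t^*$ differs from $\h_t$ by something controlled by that same small analytic error, and that $\h_t - \h$ can be made small in $H^{m+4\tau}$ provided we work with the original $\h \in H^{m+C\tau}$ (so one really applies the result with an inflated starting regularity, taking $m$ in the \emph{statement} of the bootstrap theorem a bit larger than the $m$ in Theorem~\ref{main2}). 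This closeness then lets us invoke the uniqueness clause of Theorem~\ref{main2} in $H^{m+4\tau}$ and $|\lambda - \lambda_t^*|$ small, concluding $\h = \h_t^*$ and $\lambda = \lambda_t^*$, so $\h$ is analytic on a strip of width $\rho/2 \sim t$.

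The main technical obstacle, and the point that dictates the precise lower bound on $m$, is the simultaneous balancing act in the previous step: the smoothing parameter $t$ must be small enough that $\h_t$ is analytic and the analytic residual $\|\E[\h_t,\lambda]\|_\rho$ satisfies the smallness condition of Theorem~\ref{main} with $\rho\sim t$, yet the resulting analytic correction and the smoothing defect $\h-\h_t$ must together be small in $H^{m+4\tau}$ so that the Sobolev uniqueness applies. Both requirements are powers of $t$ with exponents depending linearly on $\tau$, so they can be reconciled exactly when $m$ exceeds a constant multiple of $\tau$; once this balancing is done, the argument closes and the analyticity of $\h$ follows. This is precisely the abstract mechanism of \cite{CallejaL}, which is designed to cover situations where one has both analytic and Sobolev a-posteriori theorems with the same local-uniqueness structure.
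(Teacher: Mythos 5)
Your proposal follows exactly the strategy the paper itself sketches (and defers to \cite{CallejaL} for): replace the Sobolev solution by an analytic approximation of it obtained from its Fourier series, verify it is an approximate solution satisfying the hypotheses of the analytic Theorem~\ref{main} for a suitably chosen strip width, and then identify the resulting analytic solution with the original one via the local uniqueness in Sobolev spaces of Theorem~\ref{main2}. The only difference is that you smooth with $S_t$ where the paper truncates the Fourier series, which is immaterial to the argument.
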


The idea of the proof is very simple. We can take a truncation of
the Fourier series as an approximate solution.  Of course, these are
analytic functions. If the decrease of the Fourier series is fast
enough, it is possible to use the analytic theorem and conclude that
there is an analytic solution. By the uniqueness in Sobolev spaces,
this must be the original solution.

In Sobolev regularity this is restated as Theorem ~6.8 in
\cite{CallejaL}. In \cite{SalamonZ, GEdlL}, one can find a similar
argument for $C^r $ classes. The argument for $C^r$ classes in the
later papers is somewhat more involved since it obtains sharper
results by using better approximations than truncating.

\subsection{A practical numerical criterion for the analyticity breakdown}

The above considerations lead to a very practical and reliable way
to compute thresholds of breakdown of analytic solutions.

Observe that by now, we have efficient algorithms to compute the
invariant tori, given approximate solutions. This, of course,
immediately leads to a continuation algorithm. Since  we have an
a-posteriori theorem, we can be sure that, the approximate solutions
produced numerically (which satisfy the invariance equation up to a
few units of round--off error) correspond to true solutions if they
satisfy the non-degeneracy conditions.

Therefore, a practical algorithm to compute the threshold of
breakdown is to implement the continuation method and monitor the
non-degeneracy conditions.

In many occasions it happens that the only condition that fails is
that $|| \hat h||_{H^m}$ becomes very large. In that case, one can
argue that the solutions experience a breakdown because if there
were analytic tori in a neighborhood of parameters, the Sobolev
norms would remain  bounded. Similar methods for the periodic
Frenkel-Kontorova models and models with long range interactions
have been implemented in \cite{CallejaL}.

Some implementations of the method are already in progress
\cite{numeric} and the results will be reported elsewhere.

\section*{Aknowledgements}
The work of X.S. has been supported by CSC grant 2003619040. Both
authors have been supported by NSF DMS-0911389, and TCB NHRP 0223.
We thank T. Blass and S. Hernandez for several discussions.

\bibliographystyle{alpha}
\bibliography{QPreference}
\end{document}